\renewcommand\footnotetextcopyrightpermission[1]{}
\begin{document}

\title{Starfield: Demand-Aware Satellite Topology Design for Low-Earth Orbit Mega Constellations}

\author{Shayan Hamidi Dehshali}
\email{hamididehshali.1@buckeyemail.osu.edu}
\author{Tzu-Hsuan Liao}
\email{liao.834@buckeyemail.osu.edu}
\author{Shaileshh Bojja Venkatakrishnan}
\email{bojjavenkatakrishnan.2@osu.edu}
\affiliation{
  \institution{The Ohio State University}
  \city{Columbus}
  \state{Ohio}
  \country{USA}
}

\begin{abstract}
Low-Earth orbit (LEO) mega-constellations have gained popularity as high-coverage, high-capacity networks, serving as a potential backbone for next-generation Internet. With the deployment of laser communication terminals on commercial LEO satellites, high-bandwidth and low-latency inter-satellite routing has become feasible. However, the limited number of laser terminals per satellite, coupled with the slow acquisition process and instability of laser links, makes forming a stable satellite topology particularly challenging. Although several topology patterns—such as +Grid and Motif—have been proposed, none of these approaches account for regional traffic demand, ground station locations, or the spatial properties of the satellite constellation. Given the sparse clustering of population on Earth and the isolation of rural areas, traffic patterns are inherently non-uniform, providing an opportunity to orient inter-satellite links (ISLs) according to these traffic patterns.

In this paper, we propose Starfield, a novel inter-satellite topology design based on a demand-aware geometrical heuristic. We first formulate a vector field on the constellation’s shell according to traffic flows and define a corresponding Riemannian metric on the spherical manifold of the shell. This metric, combined with the spatial geometry, is used to assign a distance to each potential ISL, which we then aggregate over all demand flows to generate a heuristic for each satellite’s link selection. Inspired by +Grid, each satellite then selects the link with the minimum Riemannian heuristic along with its corresponding angular links. To evaluate Starfield, we developed a custom, link-aware, and link-configurable packet-level simulator, comparing it against +Grid and Random topologies. For the Phase 1 Starlink constellation, simulation results show up to a 30\% reduction in hop count and a 15\% improvement in stretch factor across multiple traffic distributions. Moreover, static Starfield, an inter-orbital link matching modification of Starfield, achieves a 20\% improvement in stretch factor under realistic traffic patterns compared to +Grid. Experimental results further demonstrate that Starfield is robust to fluctuations in traffic demand. Finally, we provide a theoretical analysis establishing a lower bound and estimating the algorithm’s performance.
\end{abstract}

\begin{CCSXML}
<ccs2012>
    <concept>
    <concept_id>10003033.10003083.10003090.10003091</concept_id>
    <concept_desc>Networks~Topology analysis and generation</concept_desc>
    <concept_significance>500</concept_significance>
    </concept>
   <concept>
       <concept_id>10003033.10003034.10003035.10003036</concept_id>
       <concept_desc>Networks~Layering</concept_desc>
       <concept_significance>500</concept_significance>
       </concept>
   <concept>
       <concept_id>10003033.10003079.10003081</concept_id>
       <concept_desc>Networks~Network simulations</concept_desc>
       <concept_significance>500</concept_significance>
       </concept>
    <concept>
       <concept_id>10002950.10003741.10003742.10003745</concept_id>
       <concept_desc>Mathematics of computing~Geometric topology</concept_desc>
       <concept_significance>500</concept_significance>
       </concept>
   <concept>
       <concept_id>10002950.10003624.10003633.10010917</concept_id>
       <concept_desc>Mathematics of computing~Graph algorithms</concept_desc>
       <concept_significance>500</concept_significance>
       </concept>
   <concept>
       <concept_id>10002950.10003624.10003633.10003640</concept_id>
       <concept_desc>Mathematics of computing~Paths and connectivity problems</concept_desc>
       <concept_significance>300</concept_significance>
       </concept>
 </ccs2012>
\end{CCSXML}

\ccsdesc[500]{Networks~Topology analysis and generation}
\ccsdesc[400]{Networks~Network simulations}
\ccsdesc[500]{Mathematics of computing~Geometric topology}
\ccsdesc[300]{Mathematics of computing~Graph algorithms}

\keywords{Inter-satellite Link Topology, Low-Earth Orbit Satellite Network, Riemannian Manifolds}

\maketitle

\section{Introduction}
Low-earth orbit (LEO) satellite networks are rapidly emerging as a promising modality for providing Internet access to areas where terrestrial wired and wireless networks are unavailable, such as remote locations or areas that have suffered a natural or man-made calamity~\cite{ma2023network,hurricanehelene,starlinkrussiaukraine,starlinkmaritime,starlinkemergency}. Multiple providers---SpaceX's Starlink~\cite{starlink}, Eutelsat's OneWeb~\cite{oneweb}, Amazon's Leo~\cite{leo}, Iridium~\cite{iridium}---have already deployed LEO satellite constellations comprising thousands of satellites and are steadily growing their service regions worldwide~\cite{starlinkavailability}. The largest of these is the Starlink constellation which comprises of more than 7,800 operational satellites in orbit serving $>$ 2.7 million customers, with plans for adding 42,000 more in the near future~\cite{starlinknetupdate,starlinknumfuture,mohan2024multifaceted}. Starlink satellites are equipped with 200 Gbps high-speed optical inter-satellite links (ISLs) and 20 Gbps phased array radio antennas for satellite-to-ground communication~\cite{starlinkstech, starlinkbandwidthusa}. With the help of a small, portable radio terminal (called ``Dishy'') users on the ground can obtain broadband Internet access, suitable for Web browsing, video streaming, gaming, and teleconferencing, etc., with a median latency of 33 ms~\cite{starlinklatency}. While high-bandwidth rural coverage was a primary motivation behind the development of Starlink, the rapid growth in the number and capabilities of low Earth orbit (LEO) satellites has created new opportunities for integrating space and terrestrial cellular networks as well. This integration promises significant improvements in network throughput and latency and positions satellite constellations as a key component of future 6G systems~\cite{xiao2022leo, andrews20246}.

A key feature of Starlink satellites is the high-speed optical ISLs, which allow packets to be routed directly from one satellite to another nearby satellite. 
While early designs of Starlink did not use direct satellite-to-satellite routing, favoring a simpler ground-to-satellite-to-ground routing (also called as the "bent-pipe" routing~\cite{abubakar2024choosing}), direct inter-satellite routing is more latency and bandwidth efficient particularly for long-haul traffic, and is being implemented to supplant bent-pipe routing~\cite{chen20243, hauri2020internet}.
Despite the intuitive benefits of ISL routing, ISLs are not without limitations: (1) each Starlink satellite is equipped with only 3 to 5 ISL transceivers; (2) the maximum range of an ISL is limited to 5000 km, with bandwidth decaying with link distance~\cite{chaudhry2021laser, shang2025channel}; (3) establishing a fresh ISL between two satellites can take several seconds due to pointing, tracking and acquisition delays~\cite{carrizo2020optical, brashears2024achieving}; (4) the high relative-velocity between satellites, and the resulting rapid inter-orbital distance variation, makes maintaining persistent ISLs extremely challenging.

On the other hand, the choice of ISLs induces a network topology of capacitated links that directly determines packet round-trip times, path bandwidth and ultimately, end-user quality-of-service~\cite{chen20243}. A popular proposal for ISL construction is the +Grid topology~\cite{wood2001internetworking}, in which each satellite forms ISLs with four nearby satellites (two satellites in the same orbit, and one satellite in each of the adjacent orbits) in an approximate grid-like pattern. This is a simple yet effective construction as the grid structure ensures all four ISLs are utilized in all the satellites, forms ISLs that are persistent, uniformly covers all directions, and provides a number of path choices between any source and destination regardless of their locations. 

A fundamental shortcoming of the grid topology is its agnosticism to network traffic demand patterns, which are inherently non-uniform due to the sparse clustering of populations and the isolation of rural areas on Earth. The grid naturally achieves near-optimal path lengths for traffic flows that are parallel to its two principal axes. For flows along directions that are skewed with respect to the axes, the shortest path on the grid zig-zags leading to an inflated round-trip time. In the worst case this inflation, or {\em stretch factor}, can be more than a factor two larger than the geodesic between the source and the destination. Recent work has generalized the grid's plus $(+)$ pattern of neighbor selection to other motifs (e.g., $\times$ etc.)~\cite{bhattacherjee2019network}. However, even when the motifs are carefully chosen to best fit a known traffic demand pattern, the notion of using the same motif on all satellites regardless of ground stations' locations can again lead to suboptimal stretch factor for many demands. 

In this work, we present Starfield, a {\em demand-aware} LEO satellite topology construction algorithm that aligns ISLs in the direction of the demand flows to minimize stretch factor. Starfield considers the satellites as embedded within a 2-dimensional spherical manifold with a general (Riemannian) metric.\footnote{The metric provides a notion of distance between points on the spherical manifold.} 
The metric is modified such that points along the direction of a demand flow are close in distance and thus provides a convenient method for satellites to prioritize nearby satellites for link selection. In a way, Starfield is a generalization of the +Grid topology. While a +Grid satellite connects with its four closest neighbors with closeness measured per the spherical Euclidean metric, Starfield encourages satellites to connect to close neighbors with closeness measured via a demand-aware Riemannian metric. 

To compute the metric, Starfield constructs a ``demand field''---a vector field on the spherical manifold---for each traffic flow, where the field is the strongest along the geodesic from the source to the destination, and decays at points farther away from endpoints. The direction and value of the demand field indicates our desired propensity for packets to flow in that direction. The idea of demand fields is loosely motivated by the concept of fields between dipoles in electromagnetism~\cite{griffiths2023introduction}; though, the choice of field function we use is different and better suited to our setting. The link selection is made at each satellite with aggregated distance of the metrics corresponding to each demand field: we set the metric to be small in directions along the field lines and large in directions orthogonal to the field lines. Demand fields naturally help with demand aggregation: If multiple small demand flows exist along a particular direction in an area, the influence of each corresponding demand field—and, consequently, the metric—accumulates, increasing the overall importance of establishing ISLs in that direction. Demand fields also enhance fault tolerance and mitigate demand interference: if some ISLs along the geodesic between a source and destination are offline or misaligned, packets can still follow nearby, near-optimal routes. Unlike the one-size-fits-all approach of +Grid, Starfield produces a topology that is finely tuned to the orientations of all traffic demand flows, without being biased toward flows along any specific principal directions.

We evaluate Starfield using a custom packet-level simulator that is faster, lighter, and more link-aware and link-configurable than existing state-of-the-art simulators and emulators, particularly by enabling flexible topology configuration and Shannon–Hartley data-rate modeling~\cite{shannon1948mathematical}. Across diverse traffic distributions, our results demonstrate up to a 15\% improvement in stretch factor and a 30\% reduction in hop count compared to +Grid and Random, with static Starfield, an inter-orbital link matching modification of Starfield, achieving a 20\% improvement in stretch factor. Furthermore, we show that Starfield remains robust under traffic perturbations, exhibiting less than 3\% stretch factor degradation under Gaussian noise.

We present a theoretical analysis of Starfield on a simplified two-dimensional flat manifold, showing that the shortest path length exceeds the geodesic flow by a factor determined by the angle between the demand field and the geodesic. This analysis explains classical results such as the $\sqrt{2}$ stretch between diagonally opposite points on a square grid and provides a lower bound on the worst-case stretch for arbitrary traffic demands. The contributions of this paper are as follows: 
\begin{itemize}
\item We propose a demand-aware heuristic algorithm for inter-satellite link (ISL) topology design, based on a Riemannian metric derived from a demand vector field.
\item We develop a custom, link-aware, and link-configurable packet-level simulator, which we use to evaluate Starfield both quantitatively and visually.
\item We present experimental results demonstrating the superior performance of Starfield in network metrics—such as stretch factor, hop count, and round-trip time—compared to the +Grid and Random baselines.
\item We present a mathematical analysis of Starfield on a simplified 2-dimensional flat manifold, providing an estimate of the shortest path and a lower bound on the worst-case stretch.
\end{itemize}

\section{Background}\label{background}
\subsection{Satellite Constellations}
The basic component of the constellation is a circular orbit containing satellites that are evenly spaced and revolve around the Earth at a constant velocity in a common direction. A \textit{shell} is defined as a collection of such orbits operating at the same altitude and corresponding orbital velocity. Each shell consists of $N_O$ orbits, with $N_S$ satellites per orbit, for a total of $n = N_SN_O$ satellites. To achieve a wide coverage area and avoid collisions, the orbits are carefully arranged in a staggered manner. Following convention, we use the earth-centered inertial frame (ECI) of reference~\cite{earthcenteredinertial}, in which the origin lies at the center of the earth, the $Z$-axis passes through the north pole, the $X$ axis is fixed in a direction relative to the celestial sphere, e.g., Prime Meridian at a specific time, and the $Y$ axis perpendicular to the $X$ and $Z$ axes. Note that the $XY$ plane coincides with the equatorial plane. 

The position of the $j$-th orbit, for $j=1,2,\ldots,N_O$, is determined through three steps. First, the orbit is laid flat on the equatorial ($XY$) plane. Second, the orbit is rotated $i^\circ$, \textit{orbital inclination}, clockwise about the $X$-axis. Third, the orbit is rotated about the $Z$-axis by an angle of $\frac{2\pi j}{N_O}$ clockwise. This latter rotation, referred to as the \textit{orbital phase}, is intended to uniformly cover the corresponding latitude range on Earth. Phase 1 Starlink consists of a single shell with $N_O = 72$ orbits with $N_S = 22$ satellites per orbit, deployed at an inclination of $i= 53^\circ$~\cite{starlinksatparameters}. 

\subsection{Communication Links}

\noindent 
{\bf User terminal.} A Starlink end-user communicates with a satellite through a small user terminal radio, the size of a pizza box. The user terminal has a phased array antenna and tracks multiple satellites at any time, that are within its $\sim 100^\circ$ field of view of the sky. The terminal chooses the best satellite available for transferring data; it is possible for the terminal to switch satellites up to 10 times per second~\cite{starlinkbeamswitching}. Communication occurs over the $K_u$ band for both uplink and downlink, with a promised maximum of 35 Mbps upload and 200 Mbps download speed~\cite{starlinkspecs}. 

\smallskip 
\noindent 
{\bf Satellite.} A starlink satellite is equipped with multiple phased array antennas in the $K_u, K_a$ and $E$ bands for communication with user terminals and ground stations (discussed next)~\cite{iyer2023system}. Operating on a 2 GHz band divided into eight 250 MHz channels over multiple beams, a single satellite is capable of serving a few thousand users simultaneously~\cite{iyer2023system}. A satellite can also communicate to a nearby satellite via an optical inter-satellite laser link with a data rate of over 200 Gbps. Each satellite is equipped with 3, 4 or 5 ISLs~\cite{starlinkstech, chaudhry2021laser}, that it uses to establish links with nearby satellites within the same or adjacent orbital planes. Due to the high relative velocities, it is difficult to establish ISLs between nearby satellites whose orbits cross each other (i.e., have a large difference in phase close to $180^\circ$). While intra-orbit ISLs are permanent, it is possible for inter-orbit ISLs to temporarily disconnect due to the distance going out of range (beyond $5000$ km). The setup delay of establishing an ISL between two satellites ranges from a few seconds to 10s of seconds due to pointing, acquisition, and tracking complexities~\cite{kaushal2017acquisition, carrizo2020optical}, but is projected to improve to the millisecond time scale in the near future~\cite{bhattacharjee2023laser}.  

\smallskip 
\noindent 
{\bf Ground station.} The Starlink system also encompasses more than 100 ground stations, which are large facilities that act as gateways between the satellite network and the Internet~\cite{starlinknetupdate}. A ground station houses multiple large dishes (and/or smaller phased arrays) that communicate with satellites on the $K_a$ and $E$ bands at few 10s of Gbps. The capacity is expected to increase to Tbps with advancements in $E$-band antenna design.

\subsection{Riemannian Manifolds and Metric}\label{background:riemannian}
We provide an informal overview of Riemannian manifold and metric, and we refer to Do Carmo~\cite{do1992riemannian} for a formal treatment of the subject. Riemannian geometry is a generalization of Euclidean geometry to ``curved'' spaces, where the metric (the yardstick) for measuring distance itself can vary depending on {\em where} the measurement is being done. Manifolds---smooth surfaces without abrupt boundaries or breaks---are the focal objects of study in Riemannian geometry. In this paper, we focus on the two-dimensional spherical surface upon which the orbits lie as the manifold of interest.\footnote{Despite being a three-dimensional object, the spherical surface is considered a two-dimensional manifold since the local neighborhood around each point on the sphere ``looks like'' a two-dimensional Euclidean space.} A fundamental principle in Riemannian geometry is to consider the manifold as a self-contained object that is not embedded within a larger parent space. Instead, the manifold is studied using an ``atlas'' which is a collection of smooth, injective maps from open sets of $\mathbb{R}^2$ to the manifold that together cover the manifold. The key geometric properties of a manifold (e.g., curvature, distance, etc.) do not depend on the particular choice of atlas used to study the manifold.  

Consider any curve $\gamma: [0, 1] \rightarrow M$ on the manifold. 
The curve may be viewed as the path of a particle moving from $\gamma(0)$ to $\gamma(1)$ over $\lambda \in [0,1]$. The length of the curve $\gamma$, or equivalently the distance traveled by the particle can be given as: 
\begin{align}
\int_{\lambda=0}^1 \left\lVert  \frac{d\gamma(\lambda)}{d\lambda} \right\rVert d\lambda = \int_{\lambda=0}^1 \sqrt{ \langle \frac{d\gamma(\lambda)}{d\lambda}, \frac{d\gamma(\lambda)}{d\lambda} \rangle } \, d\lambda  
\end{align}
where $\frac{d\gamma(\lambda)}{d\lambda}$ is the velocity tangent of the particle at $\lambda$ and $\langle \cdot, \cdot \rangle$ denotes the inner product. Since the velocity tangent may, in general, point “outside” the manifold and there is no ambient space beyond the manifold, we define the notions of tangents and inner products using the atlas. 

Let $p = \gamma(\lambda)$ be the point on the curve $\gamma$ at $\lambda$. Consider an atlas for $M$. Let $c: U \rightarrow M$ be a map from the atlas, mapping an open set $U \subseteq \mathbb{R}^2$ to $M$ such that $p \in c(U)$. A tangent at $p$ is any vector originating from $c^{-1}(p)$ in $\mathbb{R}^2$. A Riemannian metric at $p$ is any positive-definite matrix $ g_p \in S_2^{++}$.\footnote{As with the maps in the atlas, we require $g_p$ to vary smoothly over the mainfold.}
For any two tangents $\tau_1, \tau_2 \in \mathbb{R}^2$ at $p$, the inner product between the tangents is defined as $\langle \tau_1, \tau_2 \rangle = \tau_1^T g_p \tau_2.$ Therefore, the norm of a tangent $\tau \in \mathbb{R}^2$ of $p$ is $ \| \tau \| = \sqrt{\tau^T g_p \tau}$. The velocity tangent for the curve $\gamma$ at $\lambda$ is the vector $\frac{d c^{-1}(\gamma(\lambda))}{d\lambda}$. The length of the curve $\gamma$ can thus be given as:
\begin{align}
\int_{\lambda=0}^1 \left\lVert  \frac{d c^{-1}(\gamma(\lambda))}{d\lambda} \right\rVert d\lambda = 
\int_{\lambda=0}^1 \sqrt{ \frac{d c^{-1}(\gamma(\lambda))}{d\lambda} ^T g_{\gamma(\lambda)} \frac{d c^{-1}(\gamma(\lambda))}{d\lambda}} d\lambda. 
\end{align}

\section{System Model and Objective}\label{modelObjective}
We consider a LEO satellite system comprising of $N_O$ orbits with $N_S$ satellites per orbit. $S = \{s_1, s_2, \ldots, s_{n}\}$ denotes the set of all satellites. Each satellite has $\kappa$ ISLs available for forming connections with other satellites that lie within its communication range. We assume an altitude of $h$ for the satellite shell, and a maximum ISL range of $r$. $R$ denotes a graph with node set $S$, where an edge exists between $s_i$ and $s_j$ if they are within communication range. For a potential ISL $(s_i, s_j) \in R$, the packet propagation latency along the link is given by $\frac{\|\vec{P_{s_i}} - \vec{P_{s_j}}\|}{c_\mathrm{light}}$ where $\vec{P_{s_i}}$ denotes the position vector of $s_i$, and $c_\mathrm{light}$ is the speed of light in vacuum. The capacity of an ISL $(s_i, s_j) \in R$ is given by $c_1 \log(1 + \frac{c_2}{\|\vec{P_{s_i}} - \vec{P_{s_j}}\|^2})$ where $c_1$, and $c_2$ are noise parameters. Each satellite has a per-ISL buffer size of $B$ packets; we assume all packets are of the same size of $b$ bytes. Packets queued in the buffer are transmitted in a first-in first-out order. Apart from queuing loss, transmitted packets are not corrupted or otherwise lost during transit. 

$U = \{ u_1, u_2, \ldots, u_{m} \}$ represents the set of ground stations, which are fixed on the Earth’s surface and serve as both sources and destinations of data packets. In our model, we do not distinguish between user terminals and ground stations. Ground stations exchange data according to a known demand pattern $\Delta \in \mathbf{R}^{m \times m}$, where the $(i,j)$-th entry $\Delta_{ij}$ denotes the average traffic size required between ground stations $u_i$ and $u_j$. At $u_i$, packets destined for $u_j$ are generated according to a Poisson process with rate $\Delta_{ij}$ and transmitted in respective order. Similar to ISLs, the propagation delay between a $u_i$ to a $s_j$ (in either direction) within range—defined by a minimum elevation angle of $25^\circ$ above the horizon— is $\frac{\|\vec{P_{u_i}} - \vec{P_{s_j}}\|}{c_\mathrm{light}}$, where $\vec{P_{u_i}}$ denotes the position vector of $u_i$. As with ISLs, we use the Shannon formula to model the uplink and downlink bandwidth of the ground station to satellite links.

To forward a packet, a ground station computes the shortest path to the destination over the topology graph 
$G$—without intermediate ground-station relays—with link costs weighted by propagation delays, and sends the packet to the satellite at the first hop. We consider a time horizon much shorter than the satellites’ orbital period (e.g., a few minutes) and assume $G$ remains unchanged, so we do not model the cost of establishing new ISLs. Our model is restricted to the IP layer, ignoring effects from other OSI layers, and also omits packet processing delays at the satellites.

\subsection{Objective}

For each flow $(u_i,u_j)$, we define a stretch $\sigma(u_i, u_j) \geq 1$ as follows. Let $d_{\mathbf{S}^2}(u_i,u_j)$ denote the geodesic distance between $u_i$ and $u_j$ on Earth ($\mathbf{S}^2$). We use $\frac{d_{\mathbf{S}^2}(u_i,u_j)}{c_\mathrm{light}}$ as a theoretical lower bound on the packet propagation delay. Let $d_G(u_i, u_j)$ be the average achieved propagation delay for the packets from $u_i$ to $u_j$ over the topology $G$, routed through space. We define the stretch as:
\begin{align}
\sigma(u_i, u_j) = \frac{d_G(u_i, u_j) c_\mathrm{light}}{d_{\mathbf{S}^2}(u_i,u_j)}
\end{align}
We formulate the problem of designing a topology $G$---subject to the degree and range constraints described above---as the minimization of the $90^{\text{th}}$ percentile stretch across all non-empty flows.

\section{Motivation}\label{motivation}

\subsection{Demand-Aware Topology Design}\label{motivation:demandAware}

\begin{figure}[!t]
  \centering
  \begin{minipage}[t]{0.48\linewidth}
    \centering
    \includegraphics[width=\linewidth]{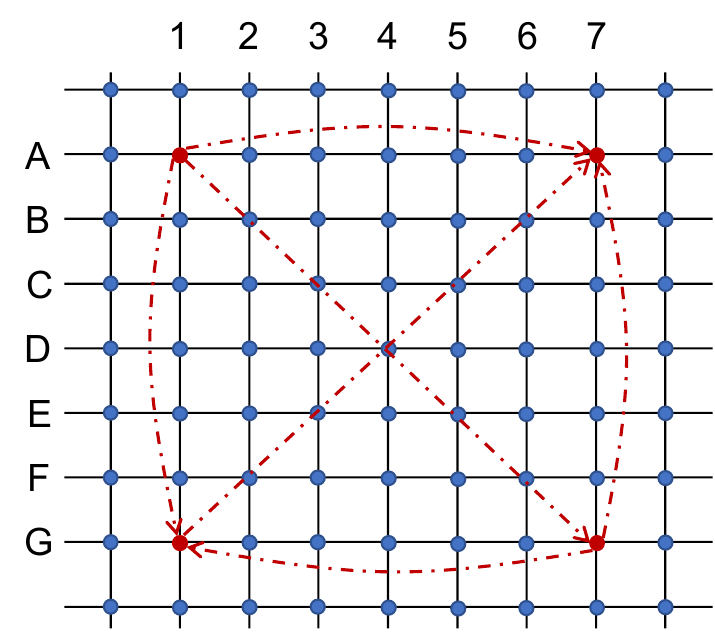}
  \end{minipage}\hfill
  \begin{minipage}[t]{0.48\linewidth}
    \centering
    \includegraphics[width=\linewidth]{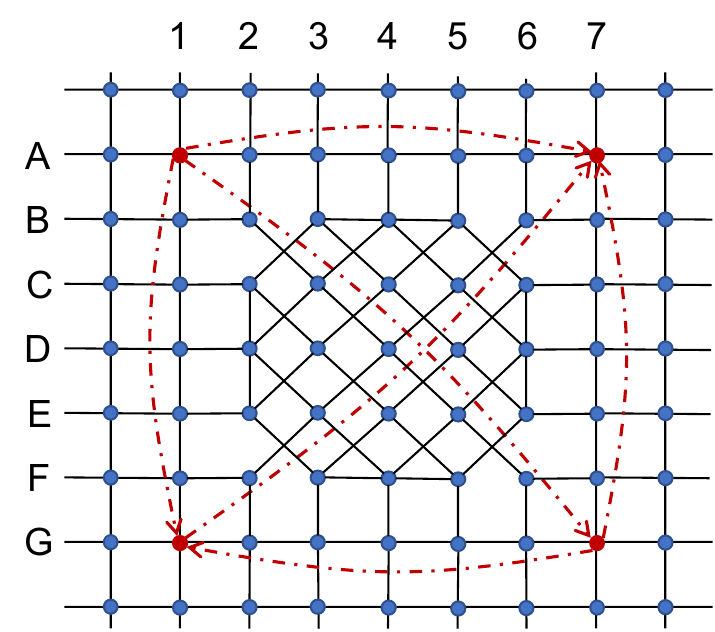}
  \end{minipage}
  \caption{+Grid topology (left) and diagonally oriented topology (right) on a grid of satellites.}
  \label{motivation:demandAware:grid}
\end{figure}

We illustrate the advantages of demand-aware network topology design through a stylized example. For simplicity, we consider a uniform arrangement of satellites forming a grid on a two-dimensional flat plane, as shown in Figure~\ref{motivation:demandAware:grid}. The distance between adjacent satellites on the grid is one unit, and each satellite can establish connections with up to four neighboring satellites located within a distance of two units. Rows are labeled alphabetically and columns numerically; for instance, $A1$ denotes the satellite located in row $A$ and column $1$. We assume traffic demands among the satellites $A1$, $A7$, $G1$, and $G7$, as depicted.\footnote{For simplicity, we assume traffic demands directly between satellites rather than between ground stations.}

Figure~\ref{motivation:demandAware:grid} (left) shows the satellites connected using a +Grid topology. In this configuration, the demands $(A1 \rightarrow A7)$, $(A1 \rightarrow G1)$, $(G7 \rightarrow A7)$, and $(G7 \rightarrow G1)$ are routed along paths that match the straight-line (geodesic) connections between their respective sources and destinations, yielding a stretch factor of 1. However, for the cross demands $(A1 \rightarrow G7)$ and $(G1 \rightarrow A7)$, the routing paths are not aligned with their corresponding geodesics. For example, for the demand $(A1 \rightarrow G7)$, the shortest available path has a length of 12 units, whereas the geodesic distance between $A1$ and $G7$ is $6\sqrt{2}$ units, resulting in a stretch factor of $\sqrt{2} \approx 1.41$.

Next, we consider the diagonally oriented topology shown in Figure~\ref{motivation:demandAware:grid} (right). In this configuration, ISLs are oriented to align as closely as possible with the dominant demand flow directions. As in the +Grid topology, the demands $(A1 \rightarrow A7)$, $(A1 \rightarrow G1)$, $(G7 \rightarrow A7)$, and $(G7 \rightarrow G1)$ achieve a stretch factor of 1. Importantly, the cross demands also experience improved performance, as the ISLs along their geodesic paths are now oriented parallel to the demand directions. Specifically, for the demand $(A1 \rightarrow G7)$, the shortest path has a length of $4\sqrt{2} + 4$, yielding a stretch factor of approximately $1.13$.\footnote{This value approaches 1 as the grid size increases.}

\subsection{Geometry of Real-World Demand Patterns} \label{motivation:demandGeometry}
\begin{figure}[!t]
\centering
  \begin{minipage}[t]{0.48\linewidth}
    \centering
    \includegraphics[width=\linewidth]{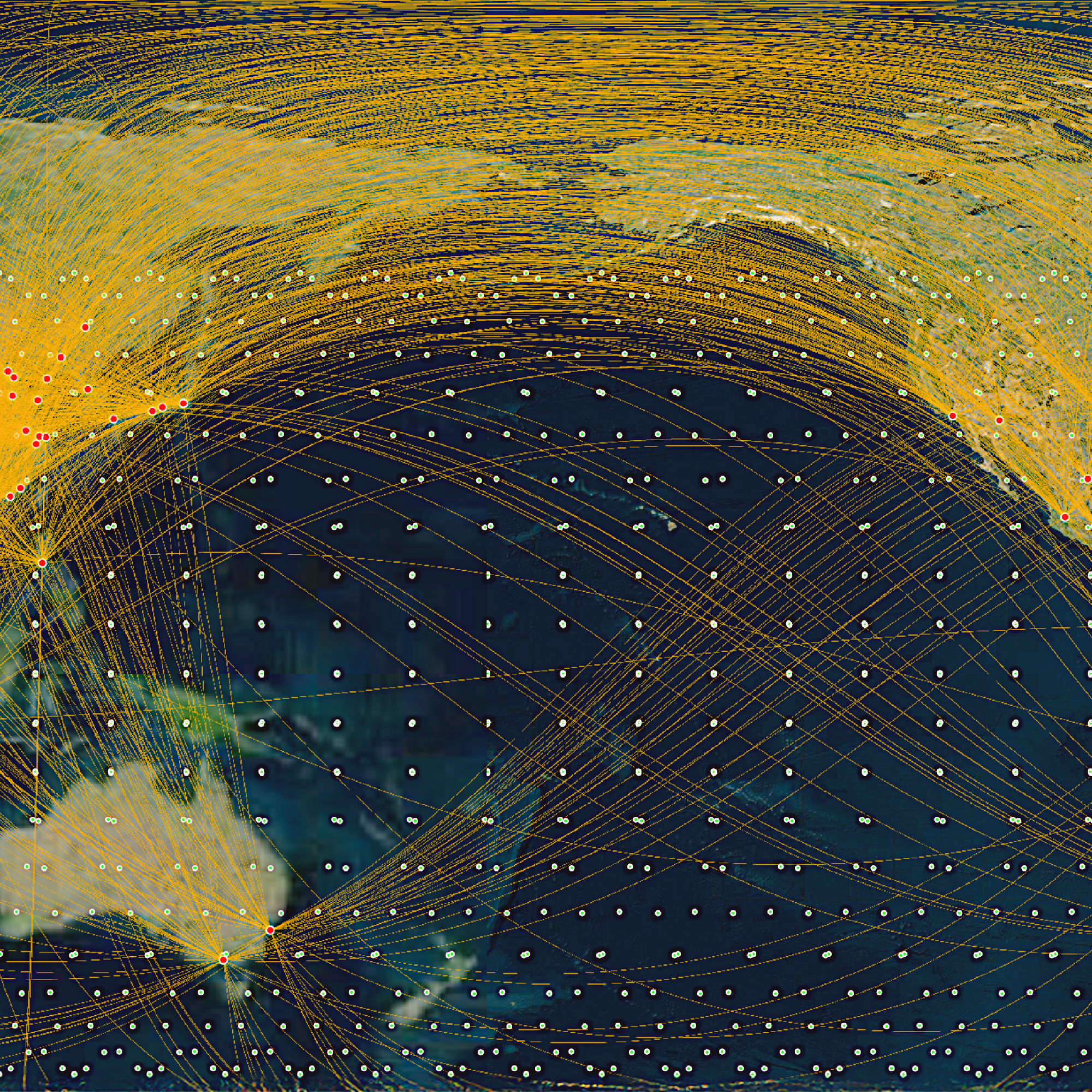}
  \end{minipage}\hfill
  \begin{minipage}[t]{0.48\linewidth}
    \centering
    \includegraphics[width=\linewidth]{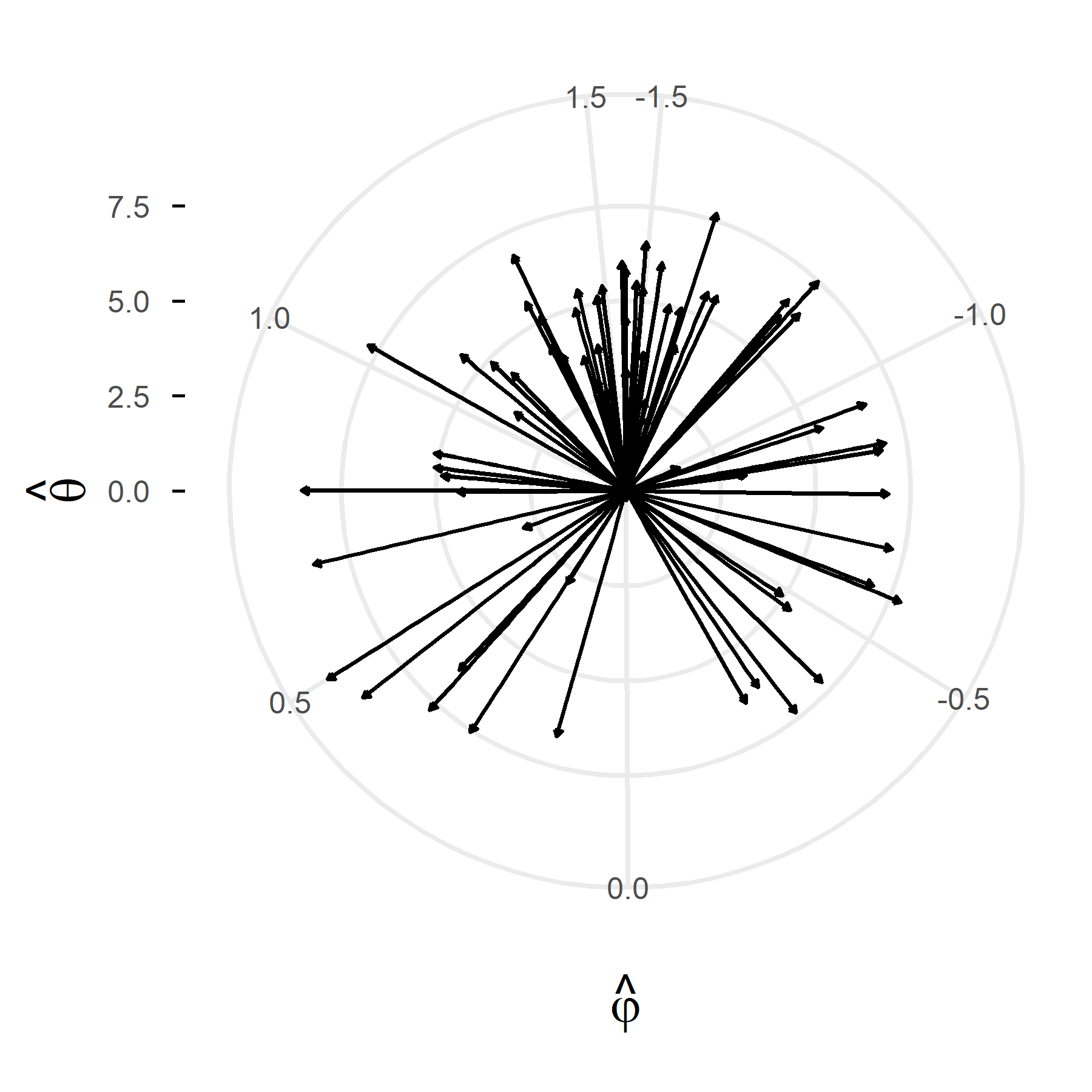}
  \end{minipage}
  \caption{Geodesic flows (orange lines) between 100 highly populated cities under the distance–population demand pattern (left), with line thickness representing traffic volume. Green and red dots denote Phase 1 Starlink satellites and ground stations, respectively. The corresponding regional directional flow components are shown as a log-scaled arrow plot with weighted mean resultant length 0.72 (right).}
  \label{motivation:demandGeometry:pattern}
\end{figure}
A demand-aware topology design is unlikely to provide significant benefits over +Grid if traffic demands are uniformly distributed in all directions at all locations. A tangible advantage arises when, within any local region, there exists one or two principal directions along which the majority of demands flow. Empirical evidence indicates that real-world demands often exhibit this property in many areas of the globe. Intuitively, this phenomenon can be attributed to the clustering of major economic and population centers. Currently, there are an estimated 150 operational Starlink ground stations worldwide~\cite{starlinkgroundstationloc}, with the majority concentrated in North America, South America, Europe, and the Australia/Oceania region. Correspondingly, most user traffic originates from North America, Europe, and the Asia-Pacific region~\cite{li2025small}, based on Cloudflare's public measurements~\cite{cloudflareRadar}. Furthermore, even when considering rural users, the fraction of the Earth’s surface inhabited by humans remains relatively small; less than 15\% of the global land area is influenced by human activity, with urban and agricultural regions comprising less than half of that area~\cite{percentagelandarea}.

To model a futuristic demand scenario, we selected 100 highly populated cities on Earth and constructed a pairwise traffic pattern based on both population and distance, where longer distances correspond to potentially greater latency reduction~\cite{chaudhry2020free} and reflect regional isolation. Figure~\ref{motivation:demandGeometry:pattern} shows the geodesic flows along with their traffic intensities (left). Striking patterns emerge over the North Pacific, connecting East Asia to North America, as well as a noticeable pattern linking Australia to North America. To quantify these patterns, we partitioned the Earth into smaller regions using a latitude–longitude mesh. We followed the rule of thumb of $l_{\theta} = l_{\varphi} \cos(45^\circ)$, where $l_{\varphi}$ is the longitude step, and $l_{\theta}$ is the latitude step, to achieve a more uniform surface area distribution across regions~\cite{trent2023parameterization}. For each region, the aggregate direction is the sum of the unit tangent vectors (Equation~\ref{appendix:geomCalc:field:tangent}), at a point in the region, of all geodesic flows passing through the region (Appendix~\ref{appendix:geomCalc:geodesicRegion}), each weighted by its traffic intensity. This aggregate direction is represented by two components: the latitude basis $\hat{\theta}$ and longitude basis $\hat{\varphi}$ (Appendix~\ref{appendix:geomCalc:latlon}), as illustrated in Figure~\ref{motivation:demandGeometry:pattern} (right). To quantify the degree of directional clustering, which highlights the presence of global geometrical traffic patterns, we employ the \textbf{weighted mean resultant length}, a metric from circular statistics. This metric ranges from 0 to 1, where 0 indicates complete randomness and 1 indicates perfect alignment. For the demand considered, the global metric value was measured at 0.72 for $l_{\varphi} = 30^\circ$ and $l_{\theta} = 20^\circ$, indicating a strong and discernible directional pattern.

\section{Starfield Design}\label{design}

\subsection{Overview}\label{design:overview}
Starfield is an algorithm that achieves near-optimal topology performance by selecting a $\kappa$-max-degree subgraph $G$, which represents the network topology, from a set of satellites $S$ based on the corresponding in-range satellite graph $R$ and demand matrix $\Delta$. Intuitively, it constructs a highway-like topology—prioritizing major traffic corridors—rather than a dense, grid-like layout under limited link budgets. Starfield orients links toward major demand flows by following geodesic curves. Toward this objective, Starfield derives an abstract vector field from the geodesic demands. It then defines a Riemannian metric on the shell manifold to assign a distance to each link. The metric captures the influence of individual demand flows by elongating distances perpendicular to the flow's direction and contracting distances along the flow. Finally, to achieve near-uniform link coverage in all directions such as in +Grid, Starfield uses the Riemannian metric to choose a primary ISL at each satellite, from which other ISLs are derived. E.g., if $\kappa = 4$, each satellite chooses one ISL through the Riemannian metric, with the other ISL computed to be perpendicular to the first ISL. 

\subsection{Abstract Demand-based Vector Field}\label{design:field}
\begin{figure}[!t]
  \centering
  \begin{minipage}[t]{0.48\linewidth}
    \centering
    \includegraphics[width=\linewidth]{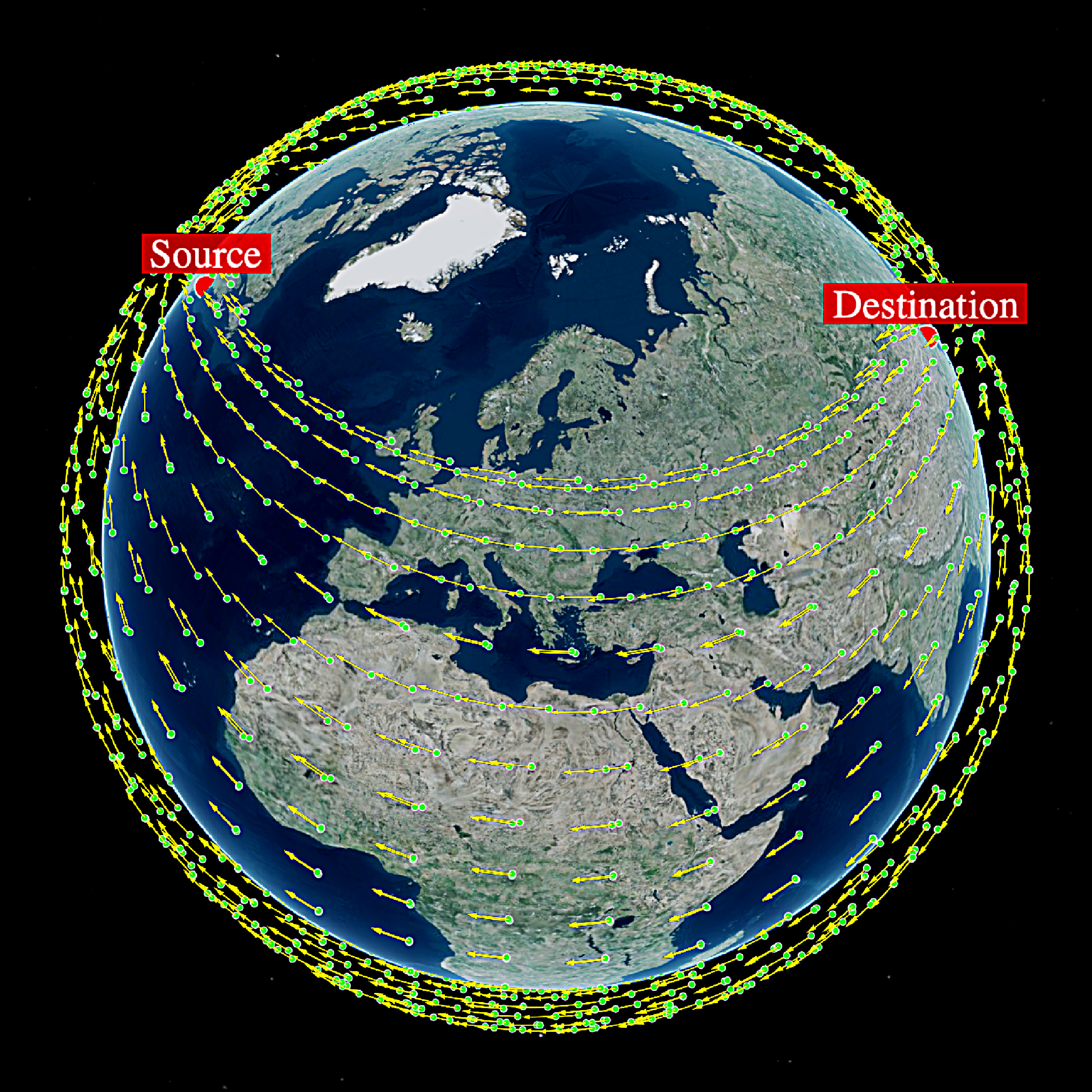}
  \end{minipage}\hfill
  \begin{minipage}[t]{0.48\linewidth}
    \centering
    \includegraphics[width=\linewidth]{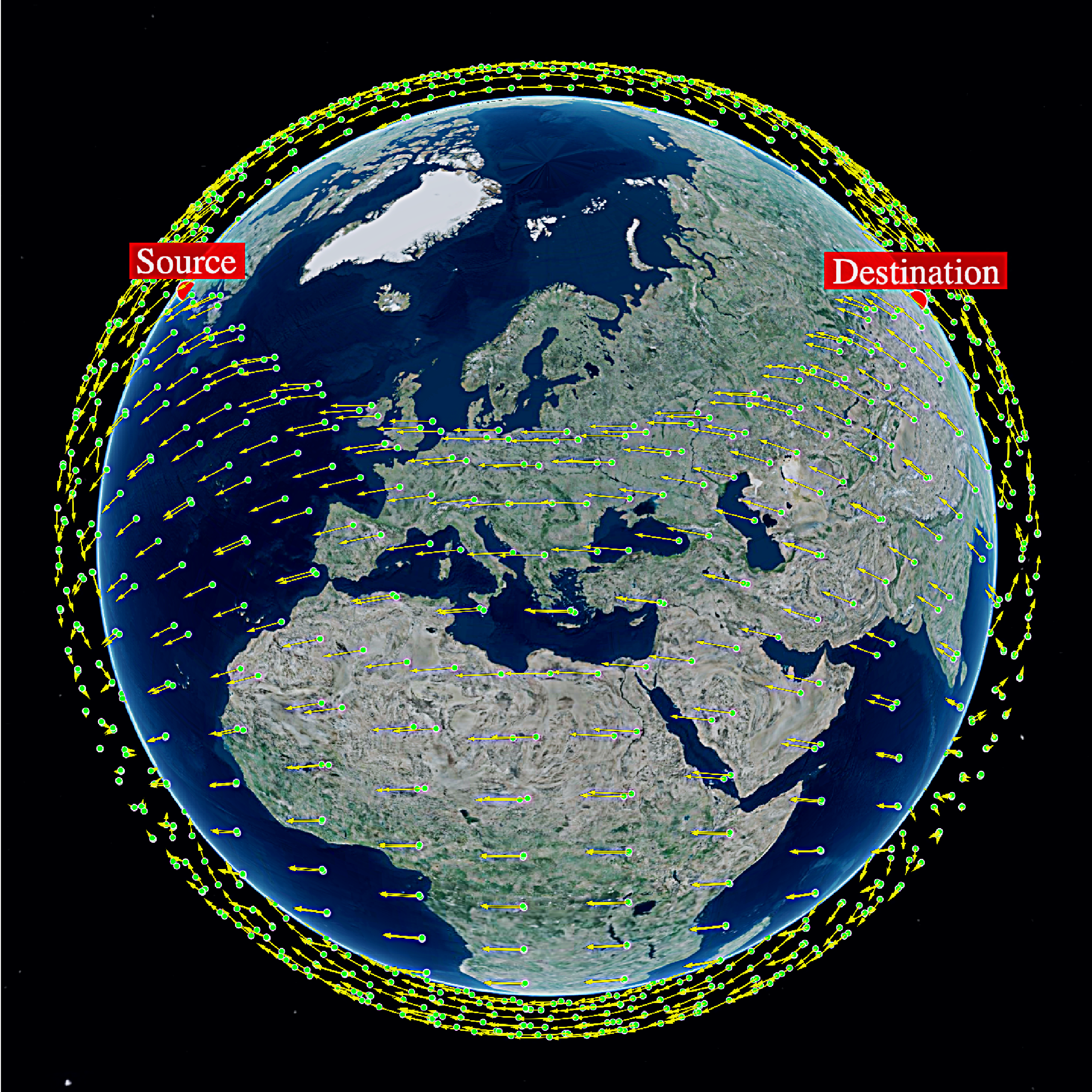}
  \end{minipage}
  \caption{Visualization of the electric vector field (left) and the proposed vector field (right) on the spherical shell under the influence of traffic flow between between a source and destination. Vector magnitudes are ignored for clarity at each satellite.}
  \label{design:field:FieldsVis}
\end{figure}
We propose an abstract vector field defined over the satellite shell to guide the orientation of inter-satellite links toward the geodesic connecting each source–destination pair, as the geodesic represents the theoretical minimum-distance path that packets can traverse. Based on our observations, selecting vector field functions oriented toward the geodesic curve—rather than directly toward the source or destination—yields shorter routing paths. In addition, the magnitude of the vector field should capture both the proximity of a point on the shell to the source or destination and the relative strength of the flow in terms of traffic volume. Incorporating these considerations, the vector field at a point $p$, with position vector $\vec{p}$, corresponding to flow $i$ with source $u_i$ and destination $v_i$, is defined as follows:
\begin{equation}\label{design:field:FieldEquation}
    \vec{f_i}(\vec{p}, \Delta_{i}) = K\Delta_{i} \left[ \frac{\hat{\tau}^p_{pu_i}}{d^2_{\mathbf{S}^2}(p,v_i)} - \frac{\hat{\tau}^p_{pv_i}}{d^2_{\mathbf{S}^2}(p,u_i)} \right]
\end{equation}
where $K$ is a constant, $\Delta_i$ denotes the traffic volume of flow $i$, $\hat{\tau}^p_{pu_i}$ and $\hat{\tau}^p_{pv_i}$ represent the unit tangent vectors of the geodesic curves from $p$ to $u_i$ and from $p$ to $v_i$, respectively, evaluated at $p$. Furthermore, $d^2_{\mathbf{S}^2}(p,u_i)$ and $d^2_{\mathbf{S}^2}(p,v_i)$ denote the squared geodesic distances between $p$ and $u_i$, and between $p$ and $v_i$, on the spherical manifold ($\mathbf{S}^2$). To simplify geometric calculations, we project ground stations from the Earth’s surface onto the satellite shell by scaling their position vectors. Specifically, for each ground station, the new position vector is obtained by multiplying its original vector by $\frac{\rho}{\rho_E}$, where $\rho$ is the shell radius, and $\rho_E$ is the Earth's radius (for details on geodesic and unit tangent vector calculations, refer to Appendix~\ref{appendix:geomCalc:field}). As shown in Equation~\ref{design:field:FieldEquation}, the field function couples the tangent vector ($\hat{\tau}^p_{pu_i}$) of one geodesic ($p-u_i$) with the squared geodesic distance $d^2_{\mathbf{S}^2}(p,v_i)$ of the other geodesic ($p-v_i$). This interaction emphasizes the direction toward the destination for points surrounding the source, and vice versa, thereby orienting the vector field along the geodesic connecting source to destination. Equation~\ref{design:field:FieldEquation} is inspired by the electromagnetic vector field and becomes directly analogous to it when $d^2_{\mathbf{S}^2}(p,u_i)$ and $d^2_{\mathbf{S}^2}(p,v_i)$ are interchanged. Figure~\ref{design:field:FieldsVis} compares the directional behavior of the electromagnetic field (left) and the proposed field (right) at satellite locations for a traffic flow between a source and destination.

\subsection{Starfield Algorithm}\label{design:algorithm}

Algorithm~\hyperref[design:overview:algorithm]{1} presents the Starfield topology generation algorithm. Starfield assigns priorities to candidate links based on a distance metric derived from an abstract vector field (Section~\ref{design:field}) that captures the influence of individual traffic flows, as shown in Lines~3–10. The metric consists of a dominant component, $f^{\perp}{f^{\perp}}^T$, which orients links along the vector field, and a small regularization term, $\epsilon I$, which prevents degeneracy and preserves sensitivity to the original geometric distance in regions where the field magnitude is extremely weak (Line~7). Distances, for priority, can be computed as described in Section~\ref{background:riemannian}. However, due to the prohibitive computational cost of evaluating the associated integral, we approximate the distance by $|\vec{f^{\perp}} \cdot (\vec{P_s} - \vec{P_{s'}})|$, where $\vec{P_s}$ and $\vec{P_{s'}}$ denote the position vectors of satellites $s$ and $s'$ forming the link (Line~8).

Inspired by +Grid and aiming to construct a symmetric topology, Starfield builds the graph $G$ based on the distances induced by the proposed metric, as shown in Lines~11–19. For each satellite, Starfield first selects the closest neighbor by minimizing the aggregate distance over all traffic flows (Line~13). Given this minimum-distance link, Starfield then selects the remaining $K = \left\lfloor\frac{\kappa}{2}\right\rfloor - 1$ links by uniformly distributing them in angle, using a step size of $\frac{\pi}{K+1}$ (Line~16). The precise angular link selection procedure $\zeta$ is detailed in Appendix~\ref{appendix:geomCalc:angular}. The reason only half of the maximum allowed links are explicitly selected by each satellite is that each satellite is assumed to receive the other half of its connections from neighboring satellites. Note that during the link selection procedure, links may be redundantly chosen, reducing overall link utilization; this effect can be mitigated by considering the second-closest or subsequent neighbors when establishing additional links.

\begin{center}
\begin{minipage}{\columnwidth}
\captionsetup{type=algorithm}

\begin{mdframed}[
  topline=true,
  bottomline=true,
  leftline=false,
  rightline=false,
  linewidth=0.6pt,
  skipabove=6pt,
  skipbelow=6pt,
  innertopmargin=6pt,
  innerbottommargin=6pt
]
\textbf{Algorithm 1.} Starfield topology generation algorithm \\[-0.5em]
\label{design:overview:algorithm}
\hrule height 0.6pt
\vspace{0.8em}

\begin{algorithmic}[1]
\Require Satellite set $S$, ISLs per satellite $\kappa$, Demand matrix $\Delta_{m \times m}^T$, Satellites' positions $P^T$, In range satellites graph $R^T$, Time period $T$, Shell radius $\rho$
\Ensure Topology $G$

\State $G \, ,D^{m \times m}_{n \times n} \leftarrow \emptyset \,, 0$
\State $\Delta \,, R \,, P \leftarrow \frac{1}{T}\sum_{t \in T} \Delta^t \,, \bigcap_{t \in T} R^t \,, P^{\frac{T}{2}}$
\State /****************** \textbf{Per-flow metric-based link distance calculation} *****************/
\ForAll{$(s,s') \in R$}
    \ForAll{$(u,v) \in \Delta$}
      \State $\vec{f_{uv}} \leftarrow \vec{f}_{uv}(\vec{P_s}\,,\,\Delta_{uv})$ \qquad // Section~\ref{design:field}
      \State $\vec{f^{\perp}_{uv}} \leftarrow \frac{\vec{f_{uv}} \times \vec{P_s}}{\rho} $ 
      \State $g^s_{uv} \leftarrow f_{uv}^{\perp}{f_{uv}^{\perp}}^T + \epsilon I$
      \State $D^{uv}_{ss'} \leftarrow \int_0^{1} \sqrt{g^{\lambda}_{uv}\dot{\gamma}^{\lambda}_{ss'}{g^{\lambda}_{uv}}^T}\,d\lambda \approx |\vec{f^{\perp}_{uv}} \cdot (\vec{P_s} - \vec{P_{s'}})| $
    \EndFor
\EndFor
\State /*********************** \textbf{Distance-based subgraph selection} **************************/
\State $K \leftarrow \left\lfloor\frac{\kappa}{2}\right\rfloor - 1$
\ForAll{$s \in S$}
    \State $s^* \leftarrow \min_{s'} \sum_{u,v \in \Delta}D^{uv}_{ss'}$
    \State $G \leftarrow G \cup \{ss^*\}$
    \ForAll{$j = 1 \to K$}
        \State $s'' \leftarrow \zeta(\vec{P_s} \,, \vec{P_{s*}} \,, j\frac{\pi}{K+1})$ \qquad // Appendix~\ref{appendix:geomCalc:angular}
        \State $G \leftarrow G \cup \{ss''\}$
    \EndFor
\EndFor
\State \Return $G$
\end{algorithmic}
\end{mdframed}
\end{minipage}
\end{center}

\subsection{Additional Considerations}\label{design:considerations}
\subsubsection{Link Length Prioritization}\label{design:considerations:priority}
In Section~\ref{design:overview}, we approximated the metric as $\vec{f^{\perp}_{uv}} \cdot (\vec{P_s} - \vec{P_{s'}})$, where $\vec{P_s}$ and $\vec{P_{s'}}$ denote the position vectors of satellites $s$, and $s'$, respectively, and $\vec{f^{\perp}_{uv}}$ is the perpendicular vector to the field induced by the source–destination pair $(u,v)$ at $s$ (Section~\ref{design:field}). While the normalized term $\vec{f^{\perp}_{uv}} \cdot \frac{\vec{P_s} - \vec{P_{s'}}}{\|\vec{P_s} - \vec{P_{s'}}\|}$ controls link orientation, the metric formulation inherently favors minimizing $\|\vec{P_s} - \vec{P_{s'}}\|$, thereby prioritizing shorter links. Although shorter links facilitate closer adherence to geodesic paths and thus reduce stretch factor (Section~\ref{evaluation:setup:metrics}), this bias comes at the expense of increased hop count (Section~\ref{evaluation:setup:metrics}), as longer links that could reduce the number of hops are disfavored.

To introduce a trade-off between stretch factor and hop count, we bias the metric minimization toward longer links by modifying the distance term to $\vec{f^{\perp}_{uv}} \cdot \frac{\vec{P_s} - \vec{P_{s'}}}{\|\vec{P_s} - \vec{P_{s'}}\|^2}$. However, uniformly prioritizing longer links—particularly for satellites near the source or destination—can lead to hops that overshoot these endpoints, potentially pushing the path outside the ground–satellite link range. This behavior unintentionally degrades both hop count and stretch factor. To address this issue, we incorporate $\|\vec{f_{uv}}\|$ as a proxy for proximity to the source or destination, and use it to determine whether shorter or longer links should be favored. Specifically, satellites experiencing stronger field intensity (i.e., closer to the endpoints) prioritize shorter links using the distance term  $\vec{f^{\perp}_{uv}} \cdot (\vec{P_s} - \vec{P_{s'}})$ , whereas satellites in weaker field regions favor longer links using the modified distance $\vec{f^{\perp}_{uv}} \cdot \frac{\vec{P_s} - \vec{P_{s'}}}{\|\vec{P_s} - \vec{P_{s'}}\|^2}$. We employ the exponential weighting function $e^{-\|\vec{f_{uv}}\|}$ to smoothly interpolate between these two regimes. This function approaches 1 in regions where the field is weak and decays toward 0 as the field strength increases. Combining these components, the modified distance metric is defined as:
\begin{equation}\label{design:considerations:priority:formula}
    D^{uv}_{ss'} = \vec{f^{\perp}_{uv}} \cdot \frac{\vec{P_s} - \vec{P_{s'}}}{\|\vec{P_s} - \vec{P_{s'}}\|^{2e^{-\|\vec{f_{uv}}\|}}}
\end{equation}
The trade-off can be manually controlled by scaling the overall field achieved by adjusting the constant $K$ in Equation~\ref{design:field:FieldEquation}. Increasing $K$ amplifies the influence of the field, thereby favoring paths that more closely follow geodesic curves and reducing stretch factor, whereas decreasing $K$ weakens the field influence and encourages longer links, favoring lower hop count. Section~\ref{evaluation:results:ablation} evaluates the impact of $K$ on Starfield’s performance through ablation experiments.

\subsubsection{Dealing with the Crown}\label{design:consideration:crown}
As shown in Figure~\ref{motivation:demandGeometry:pattern} (left), a large fraction of traffic geodesics between highly populated cities pass through the North Pole region. However, due to the limited orbital inclination, Phase 1 Starlink satellites do not cover this area. This demand pattern highlights the importance of link orientation near the boundary of satellite coverage, which we refer to as the \textit{crown}. The absence of satellite coverage in the polar region, which is inherently tied to the constellation’s inclination, leads to suboptimal routing paths in Starfield. As noted in~\cite{alexander1981geodesics} and~\cite{de2008computational}, the optimal path on a Riemannian manifold with boundary—the constellation's shell—follows a geodesic within the interior of the manifold, but aligns with the boundary curve at points where the geodesic intersects the boundary. Therefore, to achieve near-optimal paths under this constraint, links in the crown area should be oriented parallel to the crown's line of latitude, rather than forming a jagged pattern, which may arise from the original vector field design (Section~\ref{design:field}). To this end, we modify the vector field ($\vec{f_c}$) to gradually orient links toward lines of latitude as satellites approach the crown:
\begin{equation}\label{design:considerations:crown:formula}
    \vec{f_c} = \vec{f}\, + \, \eta e^{-\omega(\sin{i} - \frac{|\vec{p} \cdot \hat{z}|}{\rho})}(\vec{f} \cdot \hat{\varphi_p}) \hat{\varphi_p}
\end{equation}
where $\vec{p}$ is the position vector of the point $p$ on the shell we are manipulating field for, $i$ is the inclination, $\rho$ is the shell radius, $\hat{z}$ is unit vector at direction $Z$, $\hat{\varphi_p}$ (Appendix~\ref{appendix:geomCalc:latlon}) is the latitude unit vector at point $p$. $\eta$ is a constant controlling reorientation strength toward latitude lines, $\omega$ is a constant controlling the vicinity of the crown under the influence of reorientation. In Figure~\ref{design:field:FieldsVis}, we observe that links in the crown region—particularly north of Europe—are oriented along lines of latitude as a result of applying Equation~\ref{design:considerations:crown:formula}.

\subsubsection{Time Complexity}
The complexity consists of two parts: distance computation and subgraph selection. Since vector operations are constant-time, the per-flow link distance computation has complexity $\mathcal{O}(m^2|E(R)|)$, where $|E(R)|$ denotes the number of edges in $R$, and $m$ is the number of ground stations. Due to $\zeta$ composition of constant-time vector operations, and on-the-fly minimum distance computation, subgraph selection has complexity $\mathcal{O}(\kappa|S|)$. Therefore, the overall complexity can be combined to $\mathcal{O}(m^2|E(R)| + \kappa|S|)$.

\subsubsection{Analysis}
To simplify the mathematical model, a setting is assumed in which the manifold is a two-dimensional plane equipped with the standard Euclidean metric, and satellites remain stationary over a small time window. In this setting, we derive a lower bound on the stretch for a given traffic demand pattern under all possible topologies. We also show an upper bound on the stretch achieved by the Starfield topology.
\begin{theorem}[informal]
For any demand, the stretch achieved in Starfield is upper bounded by a factor that depends on the sum of the sine and cosine of the angles between the demand's geodesic and the vector field along the geodeisc. 
\end{theorem}
For demands that do not have other competing demands nearby, we show that Starfield's stretch is near-optimal. Refer to Appendix~\ref{appendix:analysis} for the precise mathematical theorems and proofs.

\section{Evaluation}\label{evaluation}
\subsection{Experimental Setup}\label{evaluation:setup}

\subsubsection{Simulator}\label{evaluation:setup:simulator}
To model a LEO satellite network, we implemented a custom packet-level, network-layer simulator, designed to be easier to use, faster, lighter, and more configurable—particularly in terms of inter-satellite topology—than existing state-of-the-art simulators~\cite{kassing2020exploring},~\cite{lai2020starperf} and emulators~\cite{lai2023starrynet}. To achieve link awareness, the simulator explicitly accounts for time-varying distances between satellites and ground stations, which affect achievable data rates through fluctuations in the signal-to-noise ratio. Specifically, the simulator computes link capacity using the Shannon–Hartley theorem in conjunction with the inverse-square law governing electromagnetic wave propagation. This model directly influences the input/output queue processing in simulator's network interfaces. 
\begin{table}[!t]
\centering
\begin{minipage}[t]{0.48\linewidth} 
\centering
\textbf{(a) Constellation Parameters} \\[0.3em] 
\begin{tabular}{lc}
\toprule
Parameter & Value \\
\midrule
Altitude & \SI{550}{\kilo\metre} \\
Inclination & \SI{53}{\degree} \\
Number of orbits & 72 \\
Number of satellites per orbit & 22 \\
Minimum elevation & \SI{25}{\degree} \\
Mean revolution speed & \SI{3.98}{\radian\per\hour} \\
\bottomrule
\end{tabular}
\end{minipage}\hfill
\begin{minipage}[t]{0.48\linewidth} 
\centering
\textbf{(b) Network Parameters} \\[0.3em] 
\begin{tabular}{lc}
\toprule
Parameter & Value \\
\midrule
ISL Bandwidth & \SI{1}{\tera\bit\per\second} \\
GSL Bandwidth & \SI{100}{\giga\bit\per\second} \\
ISL Noise Coefficient & 0.1 \\
GSL Noise Coefficient & 0.001 \\
Interface Buffer Size & 1000 pkts \\
Packet Size & \SI{12}{\kilo\byte} \\
\bottomrule
\end{tabular}
\end{minipage}
\caption{Simulation parameters: (a) Constellation and (b) Network}
\label{evaluation:setup:simulator:params}
\end{table}
Table~\ref{evaluation:setup:simulator:params} summarizes the simulation parameters. We modeled the Phase 1 Starlink shell with 1,584 satellites, with adjacent orbits phased by half an angular step. To avoid atmospheric interference, the maximum ISL length was limited so that no link extends below an altitude of $80\,\text{km}$. Each satellite was limited to a maximum of four ISLs at any point in the simulation, constraining the Starfield and Random topologies to enable a fair comparison with +Grid. Furthermore, Packets are routed using Dijkstra’s shortest-path algorithm~\cite{dijkstra2022note}, restricted to ISLs and forwarded according to precomputed ground station destined shortest-path tables at each node. The simulator was executed on a server equipped with an Intel® Xeon® W-2295 CPU @ $3.00\,\text{GHz}$ and $120\,\text{GB}$ of available RAM. Routing half a million packets required approximately an hour of wall-clock time. To generate results efficiently, we allocated 10 seconds for brief, exploratory experiments and 100 seconds for longer runs, capturing realistic network dynamics. The source code consists of more than 9,000 lines implemented in Go, C/C++, Python, JavaScript, and HTML/CSS, and to support reproducibility, the simulator source code is made available via an anonymous GitHub repository.\footnote{\url{https://github.com/shayunak/Starfield}}

\subsubsection{Baselines}\label{evaluation:setup:baselines}
To describe the baseline topologies, we note that satellites are uniformly distributed within each orbit, and the orbits themselves are evenly spaced over the shell. Accordingly, each satellite can be denoted by $S_i^j$, where $i$, and $j$ are sequential identifiers for the orbit and the satellite within that orbit, respectively. Let $N_O$ and $N_S$ denote the number of orbits and the number of satellites per orbit. To evaluate the performance of Starfield, we construct the following baseline topologies for comparison:
\begin{itemize}
  \item \textbf{+Grid}: Each satellite $S_i^j$ establishes links with $S_i^{j+1 \bmod N_S}$, $S_i^{j-1 \bmod N_S}$, $S_{i+1 \bmod N_O}^j$, and $S_{i-1 \bmod N_O}^j$.
  \item \textbf{Random}: To ensure inter-orbital consistency in a static random topology, we select a random feasible inter-orbital linking pattern between adjacent orbits, following the approach described in Section~\ref{evaluation:results:static}. Accordingly, we construct the static random topology by randomly partitioning the number of allowable ISLs into intra-orbital and inter-orbital links. For intra-orbital connectivity, each satellite within an orbit establishes a link with another in-range satellite from the same orbit. For inter-orbital connectivity, half of the links are assigned to the next adjacent orbit, with the remaining half implicitly provided by the previous adjacent orbit. For each adjacent orbit pair, multiple random feasible patterns—subject to the constraint that all satellites following a pattern remain within communication range—are considered. Thus, for each inter-orbital pattern $p$, each satellite $S_i^j$ in the orbit $i$ establishes a link with $S_{i+1 \bmod N_O}^{j+p \bmod N_S}$.
  \item \textbf{Motif}: The Motif-based topology scheme~\cite{bhattacherjee2019network} formulates topology selection as a search problem driven by a linear objective function. Their results indicate that most of the discovered motifs are heavily biased toward reducing hop count, largely neglecting stretch factor, and consequently \textbf{fail} to outperform +Grid. This behavior suggests an ineffective formulation of the trade-off objective function, stemming from the direct aggregation of metrics with fundamentally different natures—one being an absolute quantity and the other a normalized factor. This limitation is significant, as stretch factor is directly correlated with propagation delay, a dominant component of end-to-end network latency. Additionally, the motif-based approach constrains the topology search space to uniform motif structures, thereby limiting the range of achievable optimal solutions. Nevertheless, we use $\Phi_1$, the sum of hop count and stretch factor, as the objective function to optimize the motif search. 
\end{itemize}

\subsubsection{Traffic Demands}\label{evaluation:setup:demands}
For end-to-end traffic generation, we selected 100 highly populated cities~\cite{CityPopulation} and their corresponding locations~\cite{CityLocation}, and created pairwise flows with uniform, hotspot-oriented, distance-based, and population-based patterns. While no global data-demand model exists, population provides a reasonable proxy for city-to-city traffic. ISL routing improves latency between distant locations~\cite{chaudhry2020free}, and isolated cities—those farther from other population centers on average—require higher traffic capacities. Consequently, distance-based traffic patterns offer another realistic city-to-city data-demand. We model traffic demand at each time interval as a matrix ($\Delta_{ij}$). Each demand matrix consists of two components: (1) a constant component, where each entry is drawn from a uniform distribution bounded by the buffer size to prevent overloading of ground–satellite links, and (2) a normalized weight component $W$. The weight $W_{ij}$ varies across demand patterns, where $i$ and $j$ are source and destination ground stations. Specifically, $W_{ij}$ is defined as 1, $e^{-\frac{i+j}{m}}$, $d_{ij}$, and $p_ip_j$ for uniform, hotspot-oriented, distance-based, and population-based traffic patterns, respectively. Here, $m$ denotes the number of ground stations, $d_{ij}$ the geodesic distance between $i$ and $j$, and $p_i$ the population associated with $i$.

\subsubsection{Performance Metrics}\label{evaluation:setup:metrics}
The performance of Starfield is assessed using the following network metrics:
\begin{itemize}
    \item \textbf{Stretch factor}: The total distance traversed by a packet—including ISLs and the two ground–satellite links—divided by the geodesic distance between the source and destination. This metric reflects normalized propagation delay.
    \item \textbf{Hop count}: The total number of links a packet traverses between source and destination.
    \item \textbf{Round trip time}: The duration between a packet’s initial transmission from a source and its return to the source from the destination.
    \item \textbf{Link Usage Ratio}: The fraction of routed packets in the network that are forwarded over an ISL.
    \item \textbf{Jitter}: The variability of the round-trip time, measured as the standard deviation of RTT across packets from a source to destination.
\end{itemize}

\subsection{Results}\label{results}

\subsubsection{Static Starfield}\label{evaluation:results:static}
Maintaining stability over time needs addressing inter-orbital distance variations. In \textit{static Starfield}, we consider a stable and homogeneous topology in which ISLs between satellite pairs are always in range and, therefore, do not require any reconfiguration over an orbital period. We enforce homogeneity by following the same local ISL connectivity pattern on all satellites within an orbit. We compute the static Starfield topology by selecting for each adjacent orbit pair a feasible connection pattern that minimizes the total induced link distances. Formally, assigning indices $O_i$ to satellites in orbit $O$, the pattern $p$ is defined as:
\begin{equation}\label{evaluation:results:static:pattern}
    p^{*} = \min_{p} \sum_f\sum_{j=1}^{N_S} D_f(O_j, O'_{j+p \bmod N_S})
\end{equation}
where $N_S$ denotes the number of satellites per orbit, $D_f$ the flow distance function (Section~\ref{design:overview}), and $O, O'$ the target orbits.

\begin{figure}[!t]
    \centering
    \includegraphics[width=\columnwidth]{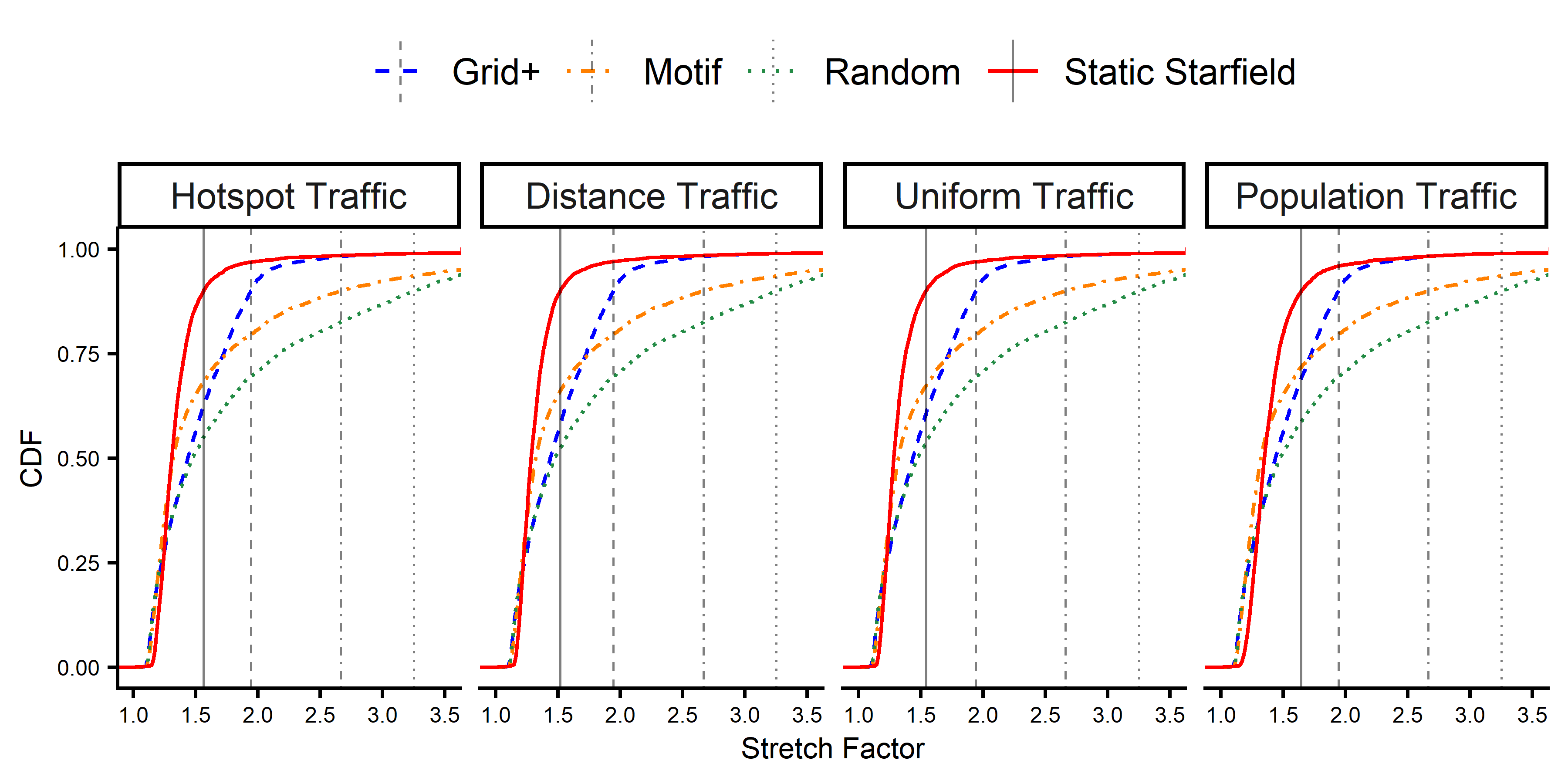}
    \caption{CDFs of city-to-city stretch factor for static Starfield, +Grid, and Random. Vertical lines indicate the $90^{\text{th}}$ percentile.}
    \label{evaluation:results:static:StretchCDFStatic}
\end{figure}

Figure~\ref{evaluation:results:static:StretchCDFStatic} shows the cumulative distribution functions (CDFs) of the stretch factor under different traffic demand patterns for the static Starfield and three baselines, +Grid, Motif, and Random. Starfield consistently outperforms both baselines across all four demand patterns, achieving up to a 20\% reduction in stretch factor relative to +Grid, up to a 40\% reduction relative to Motif, and up to a 50\% reduction relative to Random under distance-based demand. The performance gain diminishes as the traffic pattern becomes more uniform. In our simulation, highly populated cities are approximately uniformly distributed over the Earth, making the population-based demand pattern the most directionally uniform. In contrast, distance-based and hotspot traffic induce more concentrated, directional flows. As demand becomes increasingly uniform, Starfield’s advantage decreases, which can be attributed to flows arriving from a broader range of directions, making it infeasible to locally accommodate all directions efficiently.

\begin{figure}[!t]
    \centering
    \includegraphics[width=\columnwidth]{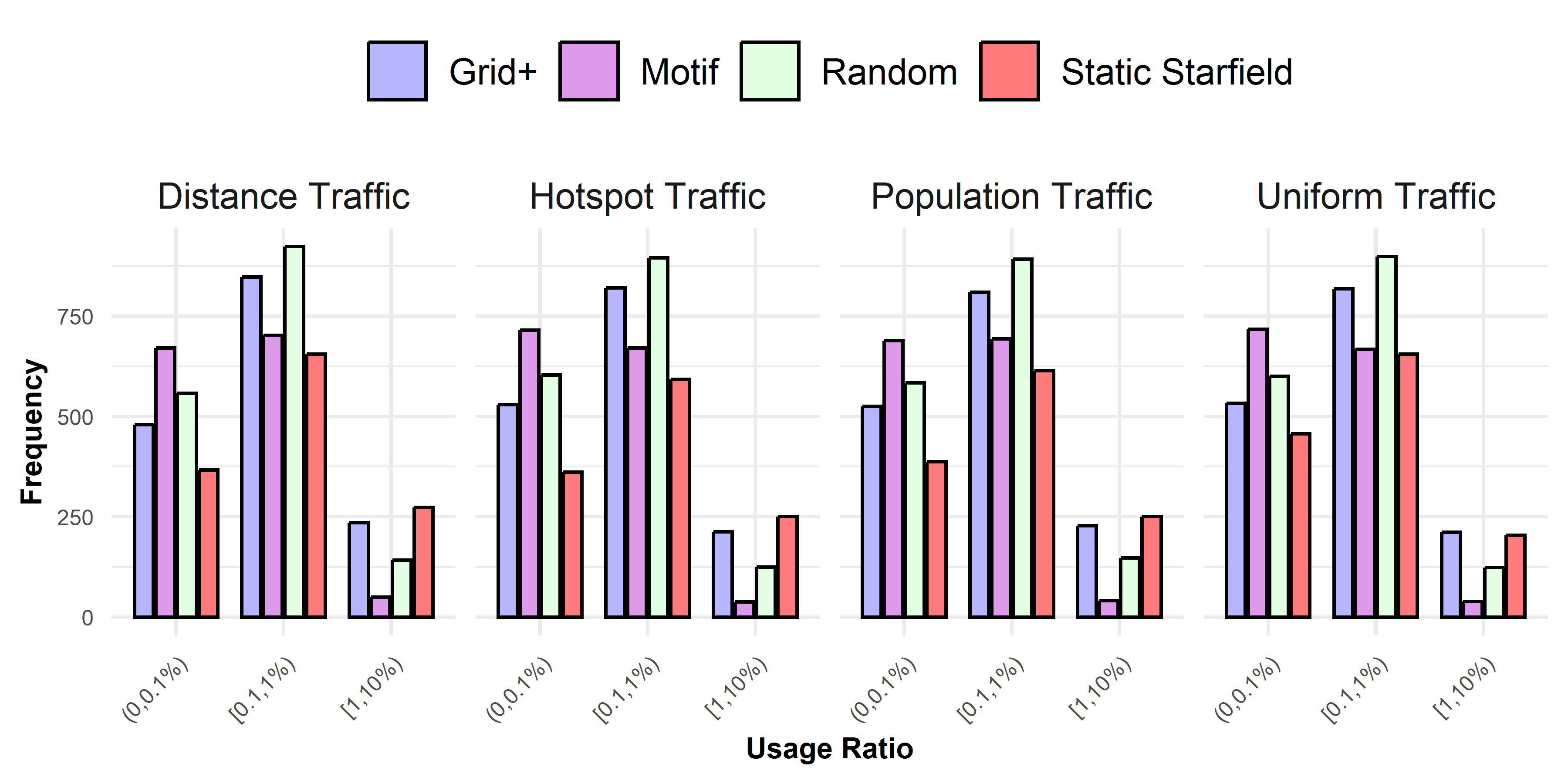}
    \caption{Link usage ratio histogram of static Starfield, +Grid, and Random.}
    \label{evaluation:results:static:LinkUsageStatic}
\end{figure}

Figure~\ref{evaluation:results:static:LinkUsageStatic} presents link usage, measured by the proportion of total packets relayed on each link as an indicator of congestion, under different traffic demand patterns for the static Starfield topology, +Grid, Motif, and Random baselines. Starfield exhibits higher link congestion across all traffic patterns, characterized by a larger fraction of heavily utilized links. This behavior arises because Starfield intentionally routes traffic along shortest paths to minimize stretch factor, which in turn concentrates traffic on these paths. In contrast, the Motif topology demonstrates lower congestion, likely due to the presence of multiple randomized shortest paths that distribute traffic more evenly across the network, and lower hop counts. Furthermore, more uniform traffic patterns result in reduced congestion overall, primarily due to the naturally balanced distribution of flows.

\subsubsection{Dynamic Starfield}\label{evaluation:results:dynamic}
To combine the complementary characteristics of population-based and distance-based demands and construct a more realistic traffic model, we merge these two patterns and conduct a longer experiment comparing dynamic Starfield, static Starfield, and +Grid over a 100-second duration. Dynamic Starfield applies the Starfield topology periodically. Specifically, we divide the 100-second interval into two 50-second epochs, a timescale over which link dynamicity becomes significant. We assume zero overhead for link reconfiguration, thereby allowing minor flexibility in topology adaptation.

\begin{figure}[!t]
  \centering
  \begin{minipage}[t]{0.48\linewidth}
    \centering
    \includegraphics[width=\linewidth]{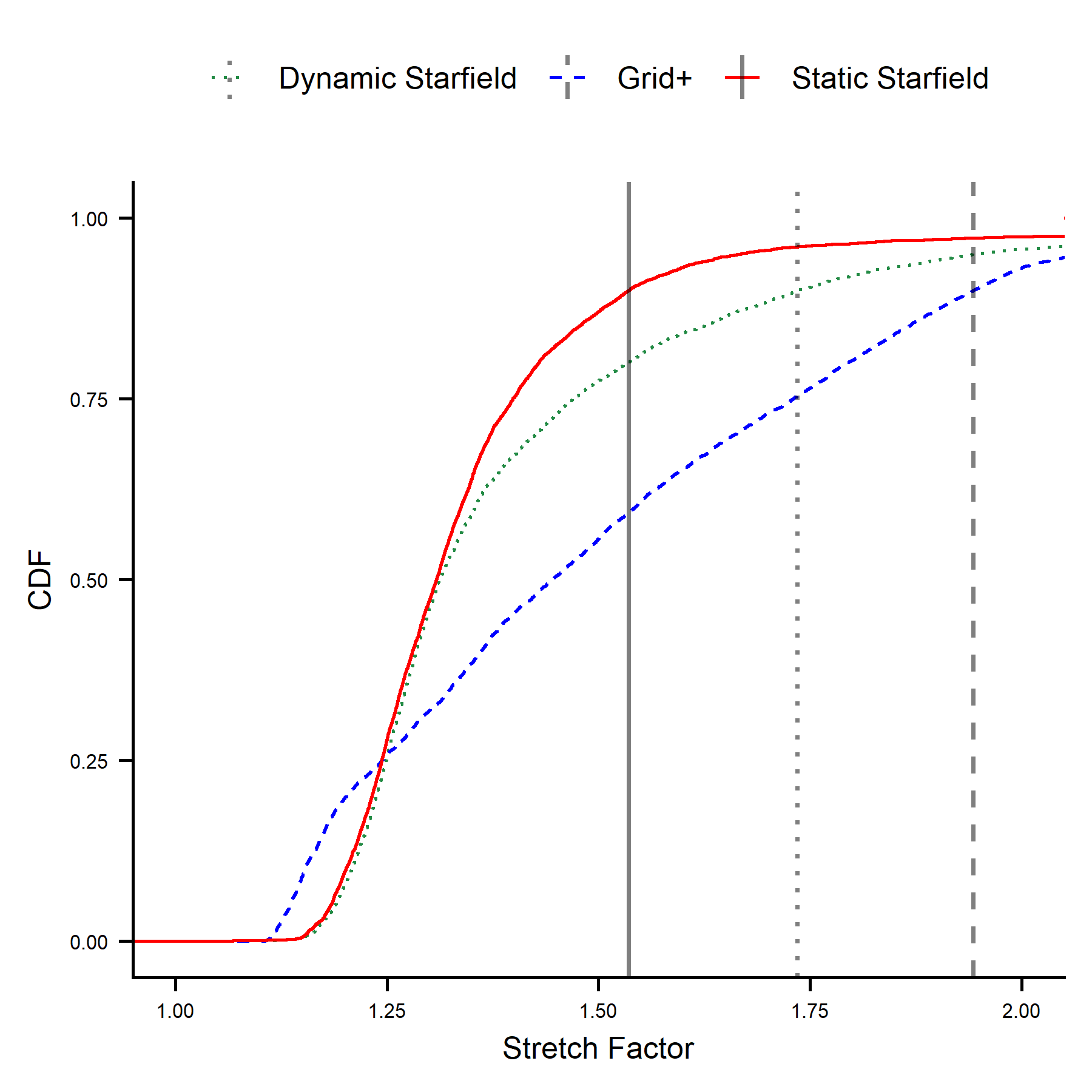}
  \end{minipage}\hfill
  \begin{minipage}[t]{0.48\linewidth}
    \centering
    \includegraphics[width=\linewidth]{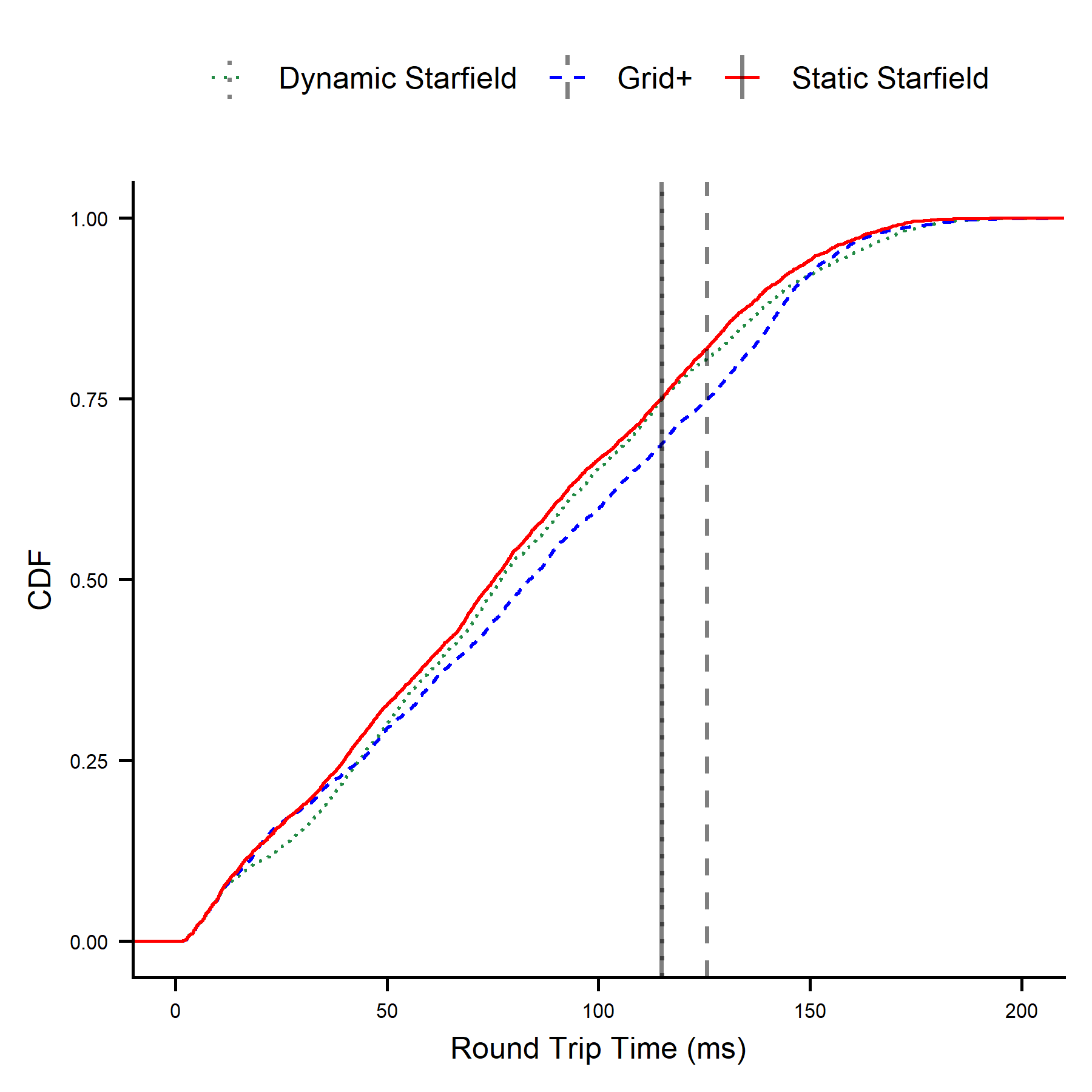}
  \end{minipage}
  \caption{CDFs of city-to-city stretch factor (left) and round trip time (right) for +Grid, dynamic Starfield, and  static Starfield. (Left) vertical lines indicate $90^{\text{th}}$ percentile, and (right) vertical lines indicate $75^{\text{th}}$ percentile.}
  \label{evaluation:results:dynamic:StretchRTTCDFDynamic}
\end{figure}
Figure~\ref{evaluation:results:dynamic:StretchRTTCDFDynamic} illustrates the CDFs of stretch factor (left) and round-trip time (RTT) (right). Both static and dynamic Starfield outperform +Grid in terms of stretch factor, with dynamic Starfield achieving up to a 15\% reduction and static Starfield achieving up to 20\% reduction relative to +Grid. However, dynamic Starfield underperforms its static counterpart, which can be attributed to temporal inconsistencies in inter-satellite distances. The figure also shows a modest improvement of approximately $10\,\text{ms}$ in RTT for both static and dynamic Starfield compared to +Grid. Although a larger RTT reduction might be expected from improved propagation delay, this gain is offset by increased transmission and queuing delays caused by higher congestion and longer hop counts, ultimately limiting the end-to-end RTT improvement.

\begin{figure}[!t]
  \centering
  \begin{minipage}[t]{0.48\linewidth}
    \centering
    \includegraphics[width=\linewidth]{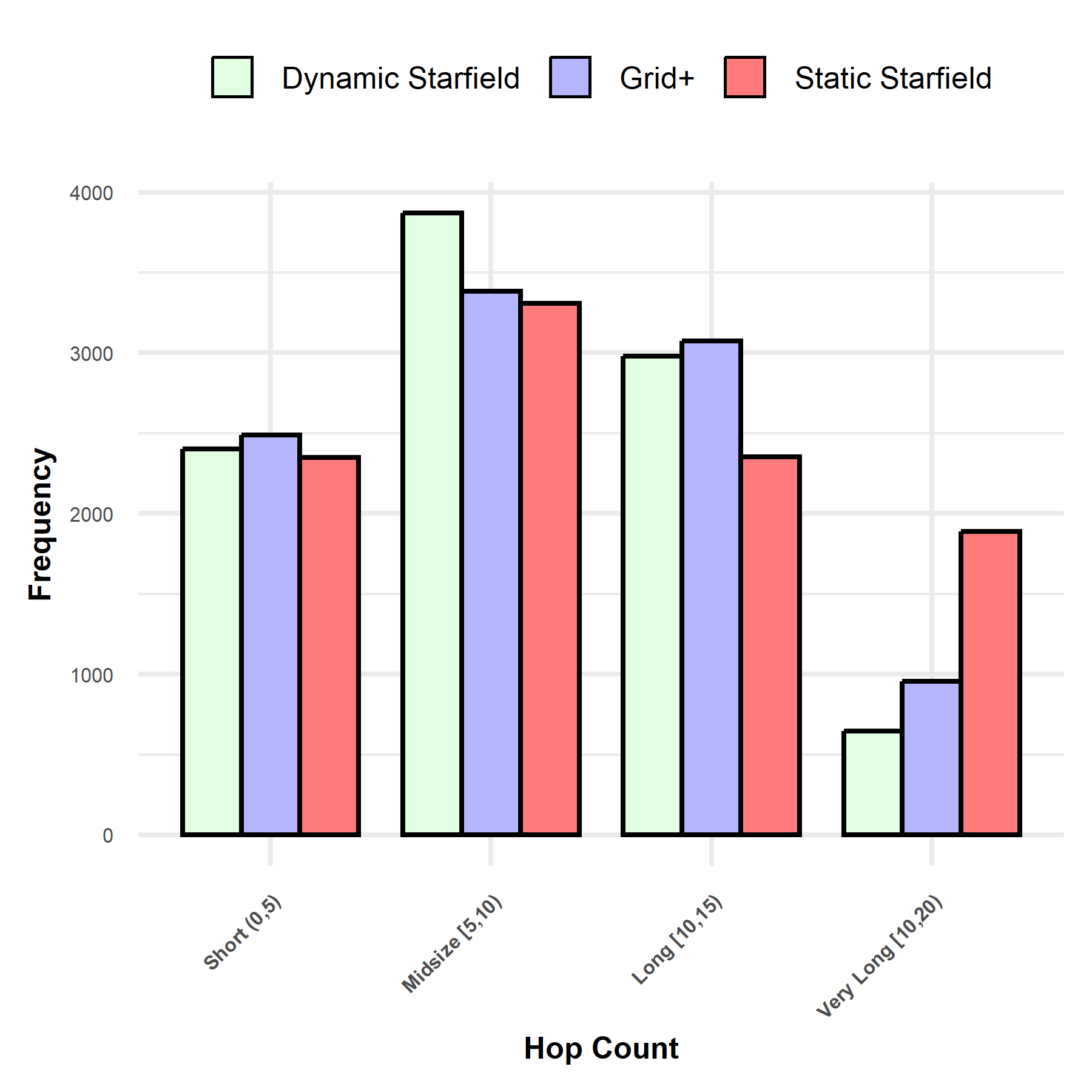}
  \end{minipage}\hfill
  \begin{minipage}[t]{0.48\linewidth}
    \centering
    \includegraphics[width=\linewidth]{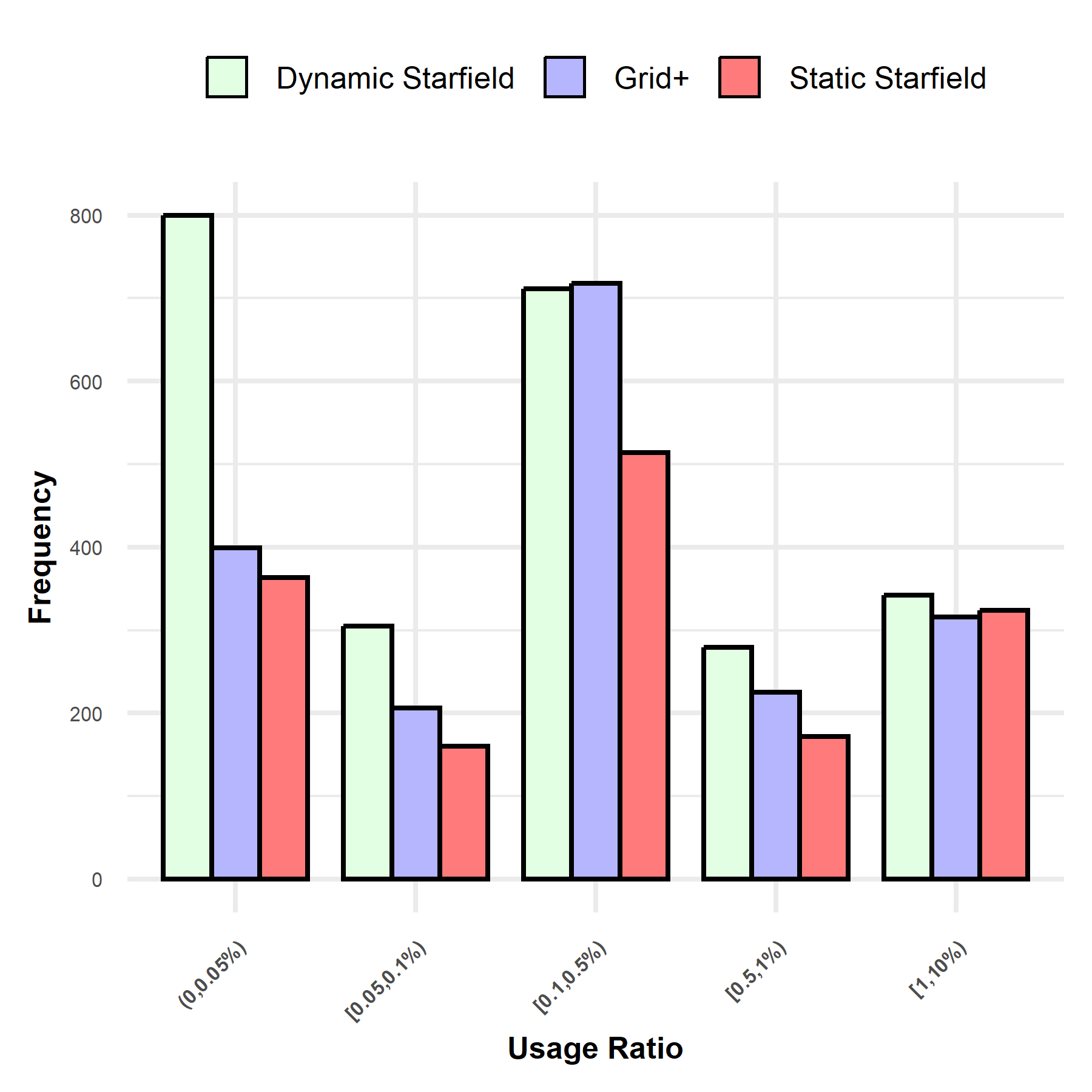}
  \end{minipage}
  \caption{Histogram of city-to-city hop count (left) and link usage ratio (right) for +Grid, dynamic Starfield, and  static Starfield. Hop counts (left) are divided into short, midsize, long, and very long paths.}
  \label{evaluation:results:dynamic:HopCountLinkUsageDynamic}
\end{figure}
Figure~\ref{evaluation:results:dynamic:HopCountLinkUsageDynamic} shows that dynamic Starfield achieves up to a 30\% reduction in hop count (left) relative to +Grid for very long paths, effectively redirecting traffic to medium-length paths of 5–10 hops. This improvement arises because the dynamic design can exploit longer inter-satellite hops, unconstrained by the inter-orbital pattern-matching limitations of the static topologies. In terms of link usage (right), dynamic Starfield exhibits higher utilization on previously lightly loaded links, as topological adaptation introduces new, less congested paths. However, heavily congested links largely persist, with dynamic Starfield showing a slightly higher frequency of highly used links, indicating that congestion is redistributed rather than eliminated.

\subsubsection{Crown Size Impact}\label{evaluation:results:crown}
\begin{figure}[!t]
  \centering
  \begin{minipage}[t]{0.48\linewidth}
    \centering
    \includegraphics[width=\linewidth]{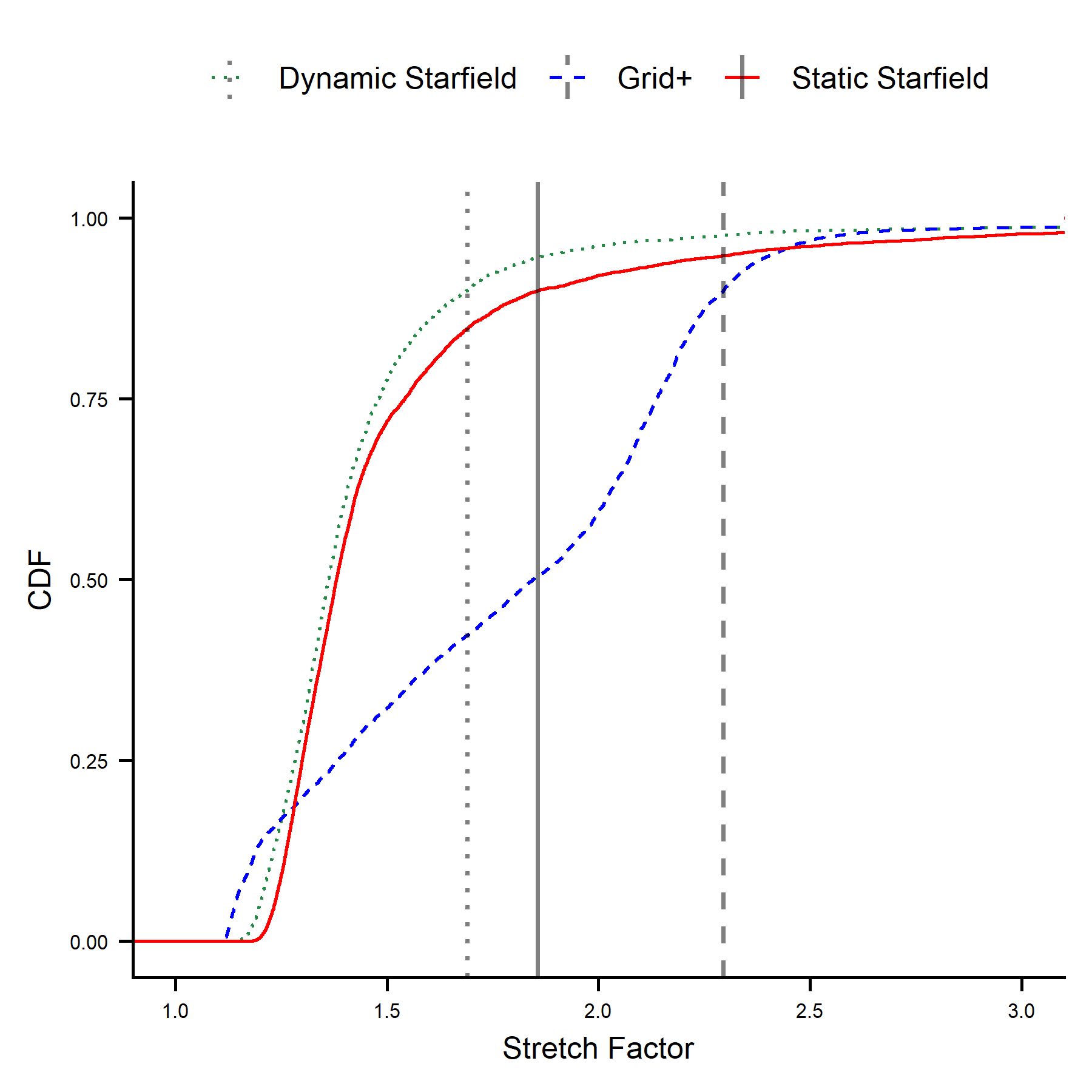}
  \end{minipage}\hfill
  \begin{minipage}[t]{0.48\linewidth}
    \centering
    \includegraphics[width=\linewidth]{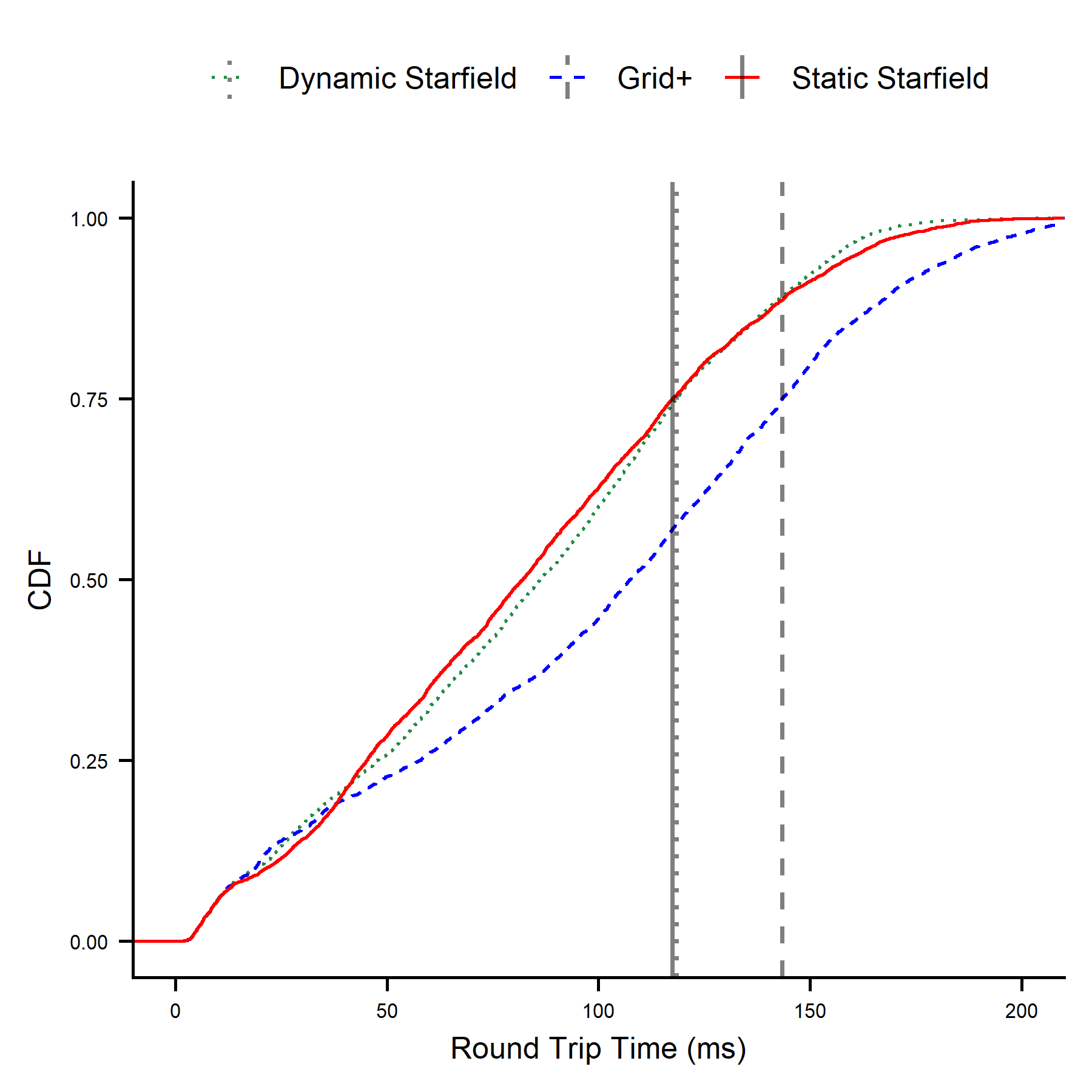}
  \end{minipage}
  \caption{CDFs of city-to-city stretch (left) and RTT (right) for +Grid, dynamic Starfield, and static Starfield under constellation inclinations of $80^\circ$; left lines mark the 90th percentile, right lines the 75th percentile.}
  \label{evaluation:results:dynamic:StretchRTTCDFDynamicHigherInclination}
\end{figure}
To evaluate the impact of crown size on Starfield’s performance, we increase the inclination from the Phase 1 Starlink value of $53^\circ$ to an arbitrarily higher $80^\circ$, thereby reducing the crown area and and enabling a larger fraction of geodesically optimal paths that are less affected by boundary.
Figure~\ref{evaluation:results:dynamic:StretchRTTCDFDynamicHigherInclination} presents the CDFs of stretch factor and round trip time for inclination of $80^\circ$. Under the $80^\circ$ inclination, dynamic Starfield improves performance by \textbf{30\%} and static Starfield by \textbf{25\%} versus +Grid, reducing RTT by $26\,\text{ms}$. These improvements exceed those observed under the $53^\circ$ inclination, where dynamic and static Starfield achieve up to 15\% and 20\% reductions, respectively—an effect that is particularly pronounced for the dynamic variant (Section~\ref{evaluation:results:dynamic}). At the same time, we observe an overall degradation in stretch factor as the inclination increases. This can be attributed to the reduced satellite density resulting from covering an approximately 44\% larger surface area with the same number of satellites. The lower density increases inter-satellite distances, leading to longer links that are less effective at closely approximating geodesic paths, as anticipated by the analysis in Appendix~\ref{appendix:analysis}.

\subsubsection{Ablation Analysis}\label{evaluation:results:ablation} 
Building on the hop-count and stretch-factor trade-off introduced in Section~\ref{design:considerations:priority}, we conduct experiments for a shorter period under different traffic demands while varying the field-constant parameter ($K$) to evaluate Starfield’s performance in terms of hop count and stretch factor. Hop count is of particular interest because fewer hops generally lead to lower system processing delay, as packets traverse fewer satellites. Although per-satellite processing delays are typically sub-millisecond and often considered negligible, satellites are inherently power-constrained devices. Under conditions of inconsistent power availability—such as when operating on the Earth’s dark side~\cite{lalbakhsh2022darkening}—satellites may experience increased processing delays due to reduced computational capability. Moreover, hop count is directly correlated with link usage and thus has an indirect impact on queuing delay.
\begin{figure}[!t]
    \centering
    \includegraphics[width=\columnwidth]{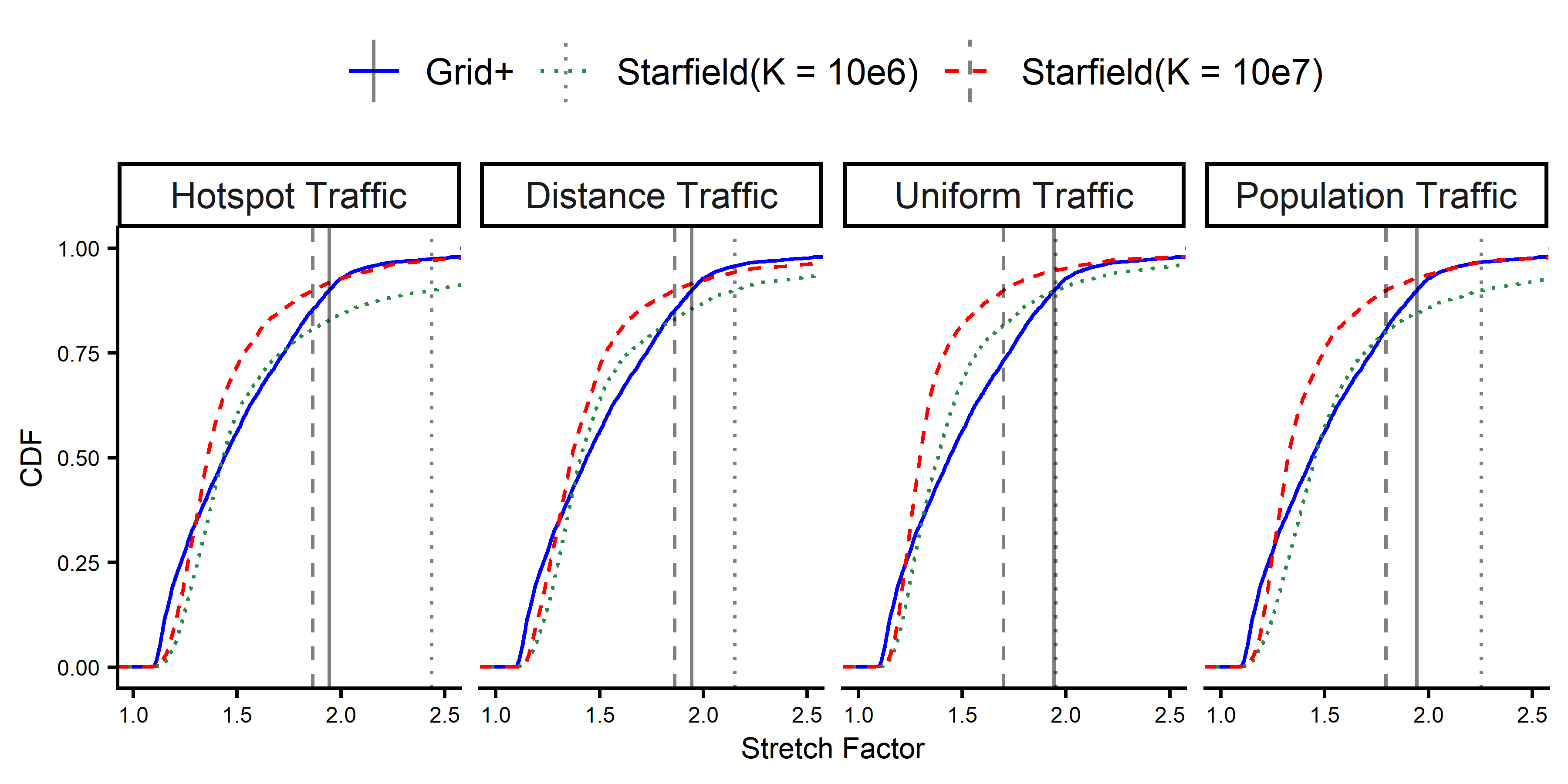}
    \caption{CDFs of city-to-city stretch factor for +Grid, Starfield with $K=10^6$ and $K=10^7$. Vertical lines indicate the 90th percentile.}
    \label{evaluation:results:ablation:StretchCDFAblation}
\end{figure}
\begin{figure}[!t]
    \centering
    \includegraphics[width=\columnwidth]{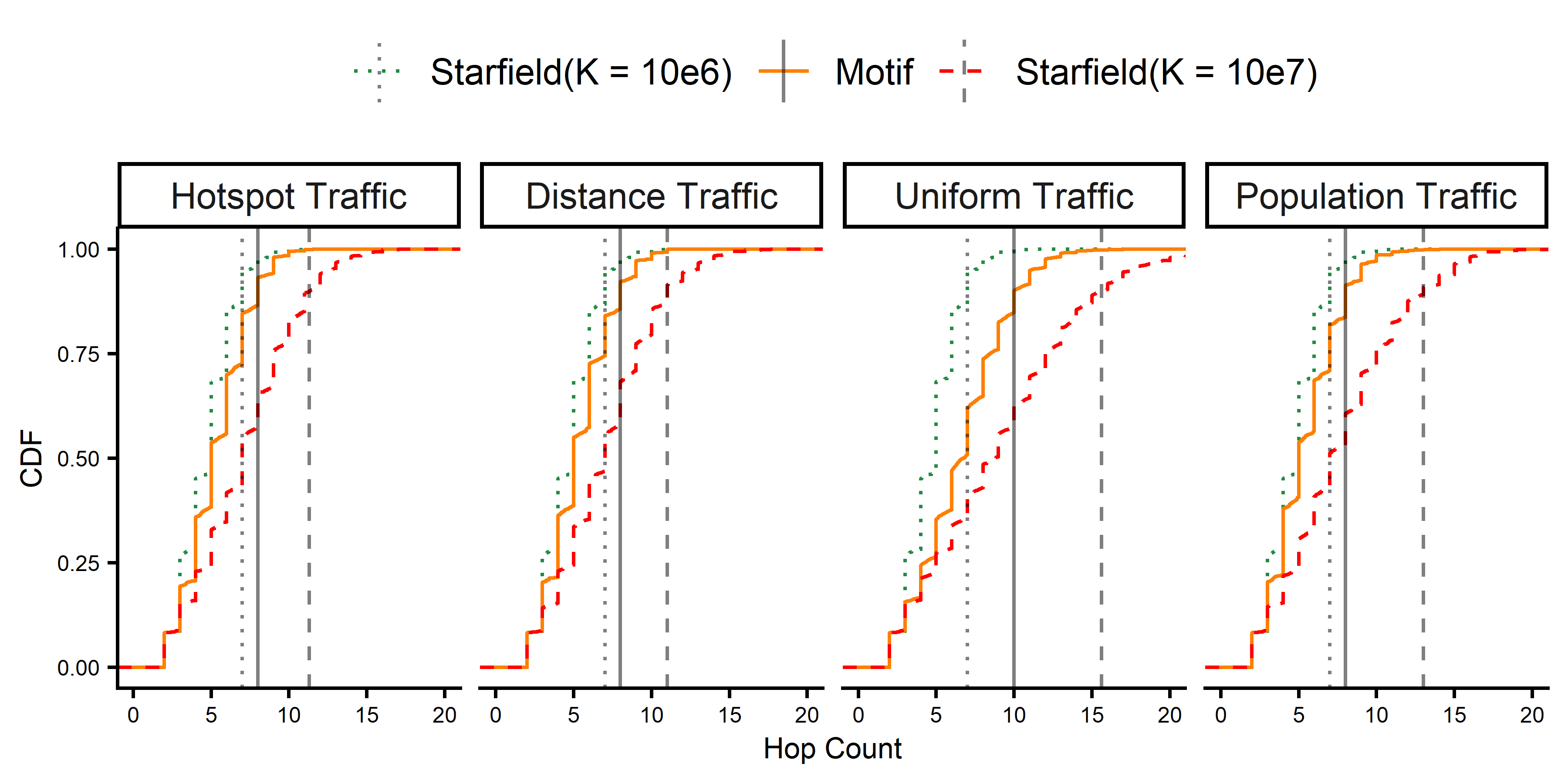}
    \caption{CDFs of city-to-city hop count for +Grid, Starfield with $K=10^6$ and $K=10^7$. Vertical lines indicate the 90th percentile.}
    \label{evaluation:results:ablation:HopCountCDFAblation}
\end{figure}
Figure~\ref{evaluation:results:ablation:StretchCDFAblation} presents the CDFs of the stretch factor under different traffic demand patterns for Starfield with $K=10^6$, Starfield with $K=10^7$, and +Grid. Across all demand patterns, the $K=10^7$ case configuration consistently outperforms +Grid, whereas the $K=10^6$ case configuration underperforms, in accordance with the design principle that larger values of $K$ favor stretch-factor optimization. Figure~\ref{evaluation:results:ablation:HopCountCDFAblation} shows the CDFs of hop count for the same configurations. In contrast to stretch factor, Starfield with $K=10^6$, consistently achieves lower hop counts than $K=10^7$ configuration across all demand patterns against Motif—which overemphasizes hop-count optimization, aligning with the design principle that smaller values of $K$ favor hop-count minimization. Notably, both stretch and hop count gains are more pronounced under uniform demands rather than skewed traffic.

\subsubsection{Starfield Robustness}\label{evaluation:results:robustness}
\begin{table}[!t]
\centering
\begin{tabular}{lcccc}
\toprule
Metric & Noiseless & $\mathcal{N}(0.5, 0.25)$ & $\mathcal{N}(1.0, 0.25)$ & $\mathcal{N}(1.0, 0.5)$ \\
\midrule
Average stretch factor & \textbf{1.396} & 1.397 & 1.4 & 1.4 \\
Average hop count & 10.4 & 10.3 & \textbf{10.1} & 10.1 \\
Average RTT & \SI{83.3}{\milli\second} & \SI{84}{\milli\second} & \textbf{\SI{82.4}{\milli\second}} & \SI{82.8}{\milli\second} \\
Average jitter & \SI{21.6}{\milli\second} & \SI{24.6}{\milli\second} & \textbf{\SI{19.1}{\milli\second}} & \SI{20.8}{\milli\second} \\
Packets routed & 1,558,174 & 1,982,315 & \textbf{2,498,065} & 2,488,551  \\
\bottomrule
\end{tabular}
\caption{Sensitivity Analysis of Starfield Performance Under Gaussian Demand Perturbations}
\label{evaluation:results:robustness:performance}
\end{table}

To assess the robustness of Starfield, we construct the topology using the demand pattern described in Section~\ref{evaluation:results:dynamic}. During simulation, we then introduce mild Gaussian perturbations to this demand pattern to evaluate how performance degrades under demand uncertainty. The experiments were conducted over an extended period, and the topology was constructed using the static Starfield design. Table~\ref{evaluation:results:robustness:performance} summarizes Starfield’s performance under different levels of Gaussian traffic noise, modeled as ($\mathcal{N}(\mu, \sigma)$): $\mathcal{N}(0.5, 0.25)$, $\mathcal{N}(1.0, 0.25)$, and $\mathcal{N}(1.0, 0.5)$. The injected noise corresponds to variations in the number of packets generated for each source–destination flow according to the specified distributions, resulting in noticeable modifications to the traffic pattern, as reflected by spikes in the total number of routed packets reported in the table. As observed, the mean ($\mu$) have a greater impact than the standard deviation ($\sigma$), while the degradation in stretch factor remains negligible (below 3\%). Furthermore, the demand dispersion introduced by the noise leads to slight improvements in hop count and RTT, indicating that Starfield remains resilient to moderate demand uncertainty.

\subsubsection{Visualizing Starfield}\label{evaluation:results:Visualization} 
\begin{figure}[!t]
  \centering
  \begin{minipage}[t]{0.48\linewidth}
    \centering
    \includegraphics[width=\linewidth]{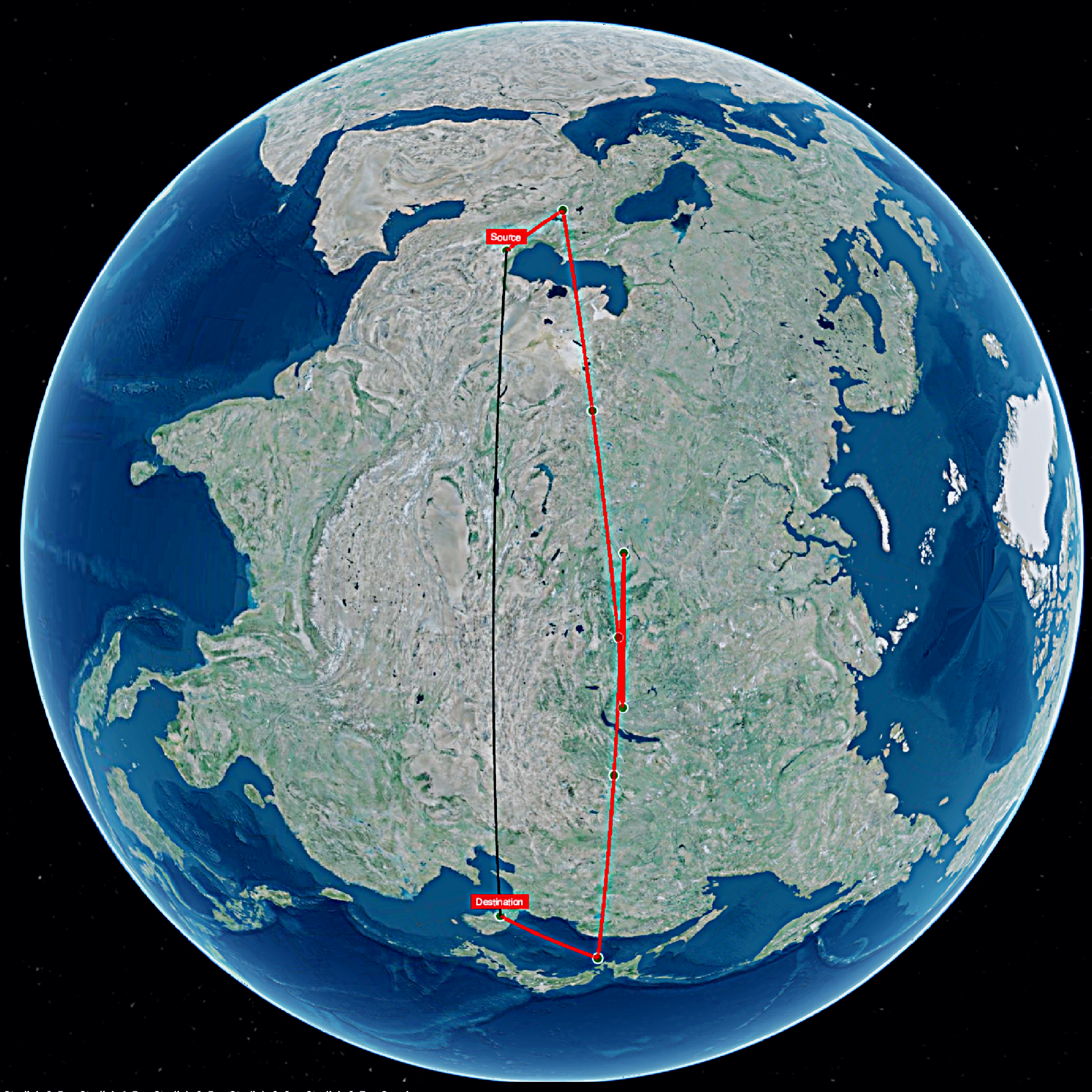}
  \end{minipage}\hfill
  \begin{minipage}[t]{0.48\linewidth}
    \centering
    \includegraphics[width=\linewidth]{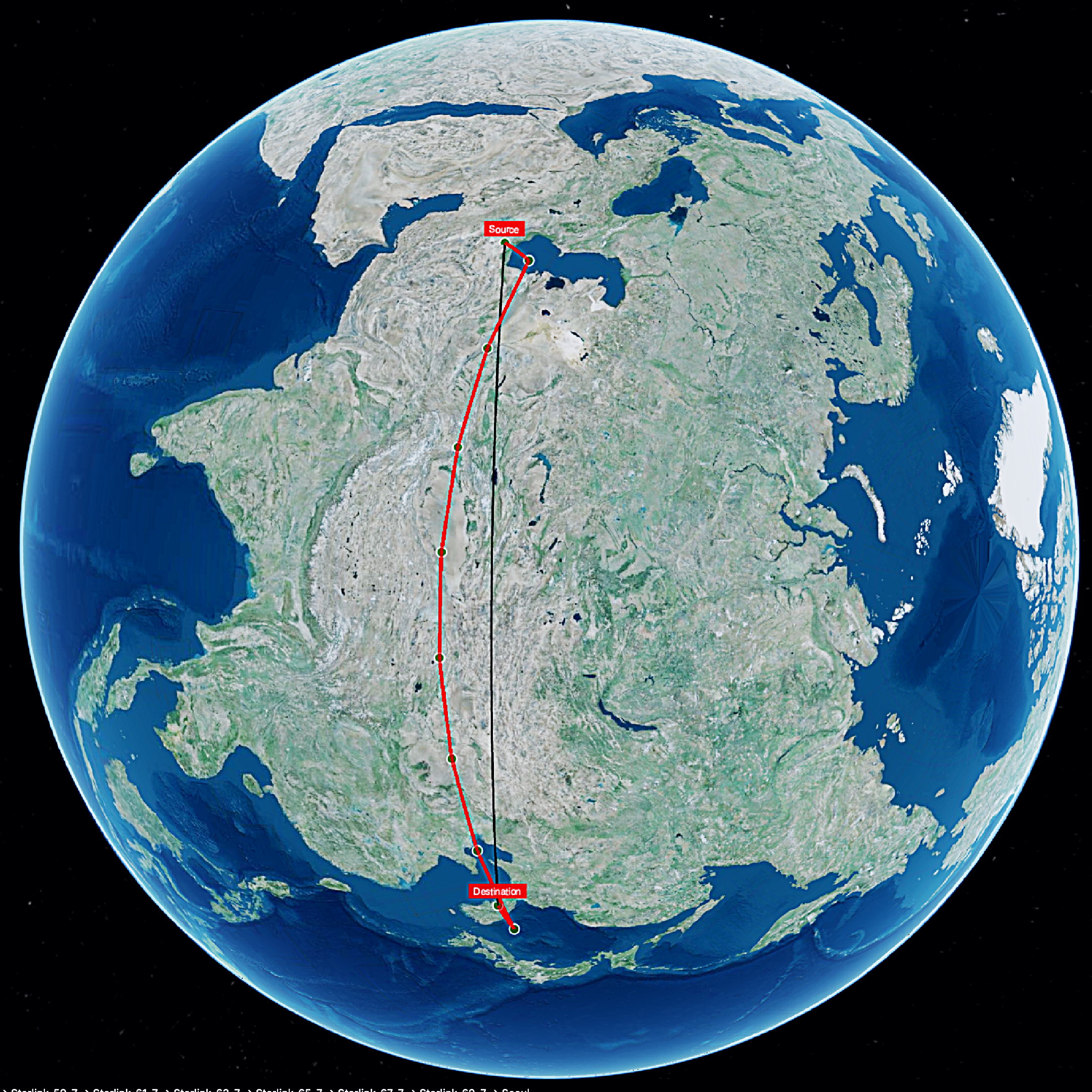}
  \end{minipage}
  \caption{The routed paths selected by +Grid (left) and Starfield (right). The black curve represents the geodesic, while the red path indicates the shortest routed path. The zig-zag routing behavior is clearly visible in the +Grid topology, whereas Starfield produces a smoother path that more closely follows the geodesic.}
  \label{evaluation:results:Visualization:path}
\end{figure}

\begin{figure}[!t]
  \centering
  \begin{minipage}[t]{0.48\linewidth}
    \centering
    \includegraphics[width=\linewidth]{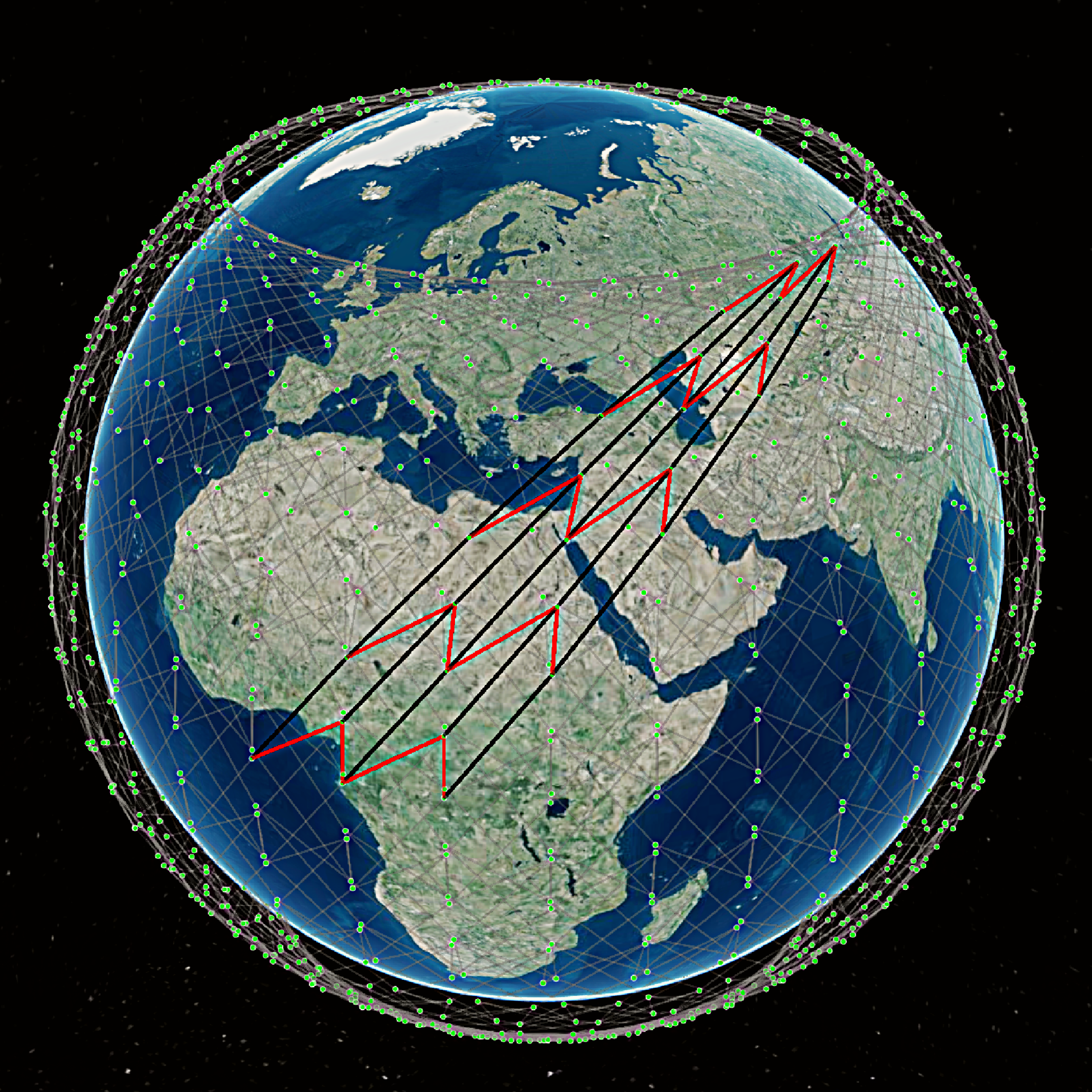}
  \end{minipage}\hfill
  \begin{minipage}[t]{0.48\linewidth}
    \centering
    \includegraphics[width=\linewidth]{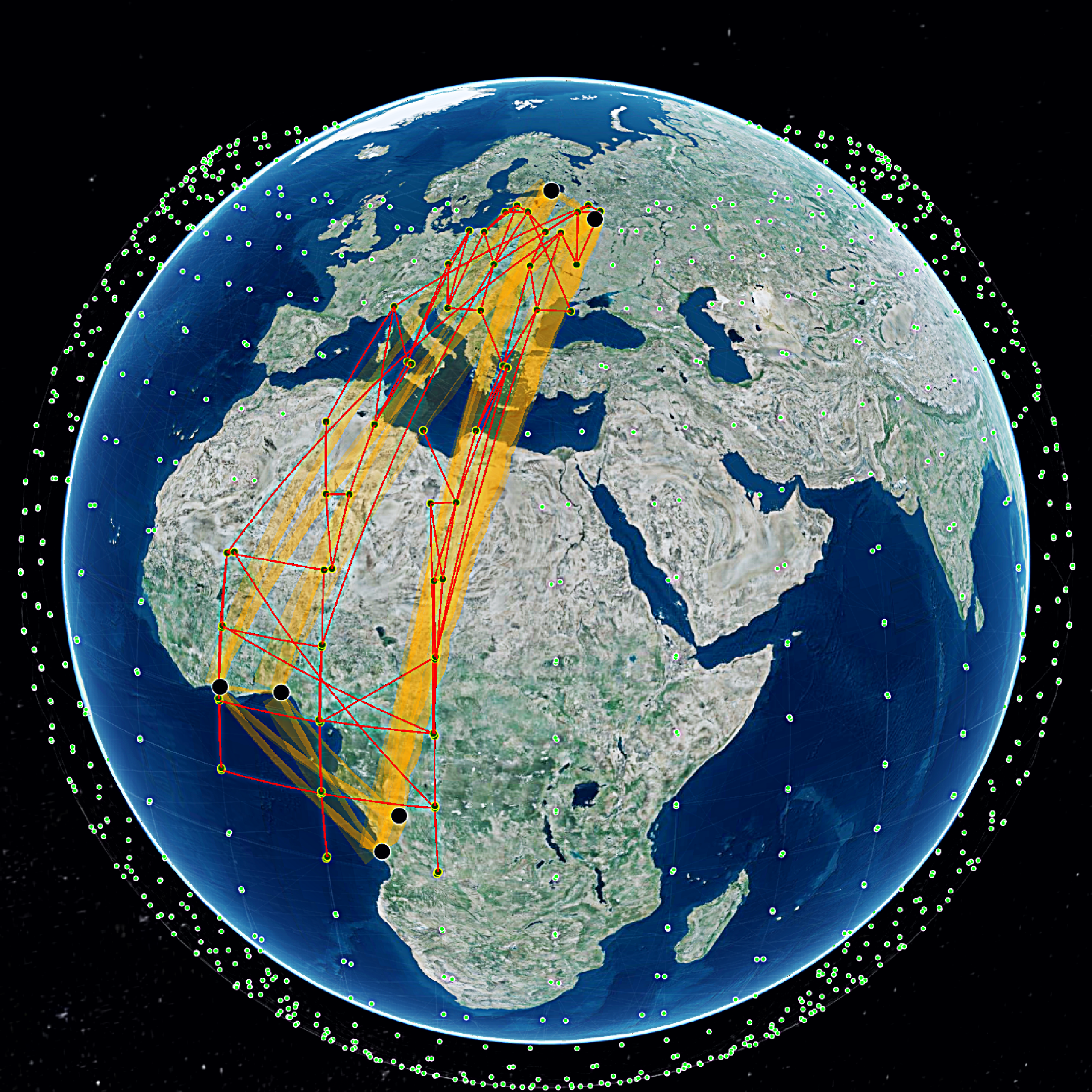}
  \end{minipage}
  \caption{+Grid (left) and Starfield (right) topologies. In +Grid (left), black lines denote intra-orbital ISLs and red lines represent inter-orbital ISLs. In the Starfield topology (right), red lines represent the selected ISLs, while orange lines depict the underlying traffic demands. The corresponding ground stations are shown as black dots. As shown, Starfield’s topology closely aligns with the traffic patterns.}
  \label{evaluation:results:Visualization:topology}
\end{figure}
To visually illustrate the performance advantages of Starfield, we visualize both the constellation and its network states using the Cesium visualizer~\cite{cesium}. Figure~\ref{evaluation:results:Visualization:path} compares the routing paths selected by Starfield (left) and +Grid (right) between a source and destination. As shown, Starfield produces a smoother path that more closely follows the geodesic curve. In contrast, +Grid exhibits a noticeable zig-zag subpath, which leads to a substantial degradation in stretch factor. Although Starfield incurs one additional hop in this example, this increase results from prioritizing path smoothness and closer alignment with the geodesic trajectory. Figure~\ref{evaluation:results:Visualization:topology} illustrates the topologies generated by Starfield (left) and +Grid (right). +Grid is constrained by the underlying orbital structure and, as a result, may fail to adequately capture the prevailing demand patterns. In contrast, Starfield adaptively orients inter-satellite links toward the dominant demand geodesics, leading to a topology whose link structure more closely aligns with the observed traffic demand trajectories.

\section{Related Work}\label{relatedWork}
Several prior studies have explored inter-satellite topology design with the goal of optimizing network performance, stability, reliability, utilization, or energy efficiency. Zhu et al.~\cite{zhu2022laser} select links based on inter-satellite visibility, while Yang et al.~\cite{yang2024analysis} propose a link selection weighting scheme derived from acquisition probability. Both approaches rely on the geometric properties of the constellation and primarily aim to construct a stable and reliable topology. Other works focus on reducing the number of inter-satellite links (ISLs). Wang et al.~\cite{wang2022intersatellite} prune ISLs from the +Grid topology to minimize bandwidth utilization, while Chen et al.~\cite{chen2025demand} introduce a zone-based dynamic topology control mechanism that improves energy efficiency and network reliability, albeit with graceful degradation in performance. A separate line of research emphasizes improving end-to-end network performance. McLaughlin et al.~\cite{mclaughlin2023grid} reduce hop count through a handoff-friendly topology design that leverages the positional predictability of satellites. Qiao et al.~\cite{qiao2023onboard} present an integer linear programming formulation that enables autonomous link selection by satellites through centralized coordination, thereby improving network delay and stability. Ren et al.~\cite{ren2025inter} propose a genetic algorithm–based topology design, where the elite selection criteria are the average and maximum network delay.

The primary work that also targets network performance optimization, and against which we conceptually compare Starfield—a heuristic approach—is the motif-based search algorithm proposed by Bhattacherjee et al.~\cite{bhattacherjee2019network}. Similar to our study, their work identifies a trade-off between stretch factor and hop count (Section~\ref{evaluation:setup:metrics}). However, their results indicate bias toward hop count, and failure to outperform +Grid, introduced by Wood et al.~\cite{wood2001internetworking}, stretch factor. In contrast, Starfield incorporates a link-length prioritization mechanism that enables flexible trade-offs between hop count and stretch factor through the tuning of a single hyper-parameter (Section~\ref{design:considerations:priority} and Section~\ref{evaluation:results:ablation}). Starfield is explicitly demand-aware, accounting for geographically skewed traffic patterns (Section~\ref{motivation:demandGeometry}), rather than relying on uniform traffic assumptions, as is typical in motif-based, including +Grid, topology. Additionally, the motif-based approach constrains the topology search space to uniform motif structures, whereas Starfield does not impose such structural limitations. Finally, heuristic algorithms are generally simpler, less exhaustive, and substantially faster than search algorithms, genetic algorithms, and linear programming. Consequently, Starfield offers superior time complexity compared to motif-based search.

The application of Riemannian geometry to networking is not new. The works~\cite{ni2015ricci, salamatian2022curvature} use the notion of Ricci curvature to analyze various path and connectivity characteristics of the Internet. Sarkar et al.~\cite{sarkar2009greedy} employ conformal mapping via Ricci flow to enable successful greedy forwarding in sensor networks. The emerging field of Riemannian metric learning has motivated several research works in wireless networks~\cite{gruffaz2025riemannian, qin2025pr}. Sadique et al.~\cite{sadique2025link} address dynamic link scheduling in LEO satellite networks using deep neural networks on Riemannian manifolds, while Shelim et al.~\cite{shelim2022geometric} apply geometric machine learning for link scheduling in device-to-device networks.

\section{Conclusion}\label{conclusion}
In this paper, we proposed \textbf{Starfield}, a demand-aware topology design heuristic that leverages a Riemannian metric induced by a vector field derived from traffic geometry and manifold boundary constraints. By orienting inter-satellite links toward geodesic paths, Starfield targets the minimization of end-to-end stretch factor, thereby reducing propagation delay. In addition, Starfield accounts for hop count and its trade-off with stretch factor through a tunable hyper-parameter, allowing the topology to be adapted to different performance objectives. To support the near-optimality of the proposed heuristic, we provided a mathematical analysis. Starfield can be applied at any arbitrary period of the constellation’s evolution, and to promote static topological behavior, we introduced static Starfield based on inter-orbital matching. Using our custom-built packet-level simulator—which is more link-aware, link-configurable, and lightweight than existing state-of-the-art tools—we demonstrated that, in the Phase 1 Starlink constellation, Starfield achieves up to a 15\% reduction in stretch factor, and up to a 30\% reduction in hop count across diverse traffic patterns relative to +Grid. Static Starfield further achieves up to a 20\% improvement in stretch factor. We also evaluated the robustness of Starfield under traffic perturbations and observed less than a 3\% average degradation in stretch factor, indicating resilience to demand uncertainty. Finally, we visualized the resulting topologies and routing paths on a three-dimensional Earth model, illustrating Starfield’s ability to align network structure with underlying traffic demands.

As future work, Starfield can be extended to design topologies across multiple orbital shells, as next-generation constellations increasingly adopt multi-shell architectures at different altitudes. By prioritizing traffic demands and mapping higher-priority flows to lower-altitude shells—where shorter propagation distances are achievable—Starfield could further reduce latency. Realizing this capability would require equipping Starfield with a corresponding multi-shell routing protocol. In addition, Starfield can be extended to address non-stationary network dynamics, e.g., time-varying traffic demands and ISL establishment delays, as well as geometric effects arising from Earth’s rotation and the resulting variation in ground station orbital coverage over longer periods.
\bibliographystyle{ACM-Reference-Format}
\bibliography{bibliography}

\appendix

\section{Geometrical Calculations}\label{appendix:geomCalc}
\subsection{Vector Field Geometrical Components}\label{appendix:geomCalc:field}
To compute the geodesic on a sphere between two points $p$ and $q$, with position vectors $\vec{p}$ and $\vec{q}$, we first determine the azimuth angle $\alpha$ between them. The geodesic distance is then obtained by multiplying this angle by the shell radius $\rho$. The sine of the azimuth angle can be calculated using trigonometry in the triangle formed by the Earth’s center, $p$, and $q$. Accordingly, we derive both the azimuth angle and the geodesic distance as follows:
\begin{equation}\label{appendix:geomCalc:field:geodesic}
    \alpha = 2\sin^{-1}{\left(\frac{\|\vec{p} - \vec{q}\|}{2\rho}\right)} \Longrightarrow d^2_{\mathbf{S}^2}(p,q) = 2\rho\sin^{-1}{\left(\frac{\|\vec{p} - \vec{q}\|}{2\rho}\right)}
\end{equation}

To compute the unit tangent vector of the geodesic from point $p$ to $q$ at $p$, with position vectors $\vec{p}$ and $\vec{q}$, we consider the great circle $C$, defined as the intersection of the spherical shell with the plane passing through the Earth’s center, $p$, and $q$, since arc of $C$ connecting $p$ and $q$ constitutes the geodesic curve. By definition, the tangent vector at $p$ lies in the plane of $C$ and is orthogonal to $\vec{p}$, since $\vec{p}$ is the radius vector of $C$. Let $\vec{c^{\perp}}$ denote the normal vector of the great-circle plane, given by $\vec{c^{\perp}} = \vec{q} \times \vec{p}$. Because both $\vec{p}$ and and the tangent vector lie in the plane of $C$, the tangent vector is orthogonal to $\vec{c^{\perp}}$ is also perpendicular to the tangent. Consequently, the unit tangent vector can be obtained by taking the cross product of $\vec{c^{\perp}}$ and $p$, followed by normalization, yielding:
\begin{equation}\label{appendix:geomCalc:field:tangent}
    \vec{c^{\perp}} = \vec{q} \times \vec{p} \: , \: \vec{\tau}^{p}_{pq} = \vec{c^{\perp}} \times \vec{p}  \Longrightarrow \hat{\tau}^p_{pq} = \frac{(\vec{q} \times \vec{p}) \times \vec{p}}{\|(\vec{q} \times \vec{p}) \times \vec{p}\|}
\end{equation}

\subsection{Angular Link Selection}\label{appendix:geomCalc:angular}
For a satellite $s$ and a selected satellite $s^*$ with which we desire to establish a link, we choose a neighboring satellite $s'$ of $s$ such that the angle between links $ss'$ and $ss^*$ is closest to a target angle $\beta$. This is achieved by leveraging the inner product of the corresponding link vectors. Since the inner product yields the cosine of the angle between two vectors, and $\beta < \pi$, we compare this value to cosine ${\beta}$. Accordingly, the chosen neighbor is obtained by minimizing the following term: 
\begin{equation}\label{appendix:geomCalc:angular:minimumAngle}
    \zeta(\vec{s}, \vec{s^*}, \beta) = \min_{s'} \left|\frac{(\vec{s} - \vec{s'}) \cdot (\vec{s} - \vec{s^*})}{\|\vec{s} - \vec{s'}\|\|\vec{s} - \vec{s^*}\|} - \cos{\beta} \right|
\end{equation}
where $\vec{s}$, $\vec{s'}$, and $\vec{s^*}$ denote position vectors of $s$, $s'$, and $s^*$, respectively. However, to ensure fair link selection, each link must be chosen according to a consistent orientation across all satellites. For a given target link, there may exist two candidate neighbors on opposite sides with similar values derived from Equation~\ref{appendix:geomCalc:angular:minimumAngle}. To resolve this ambiguity, we enforce a consistent clockwise link orientation policy across satellites. This orientation is determined using the following criterion, applied prior to evaluating Equation~\ref{appendix:geomCalc:angular:minimumAngle}:
\begin{equation}\label{appendix:geomCalc:angular:orientationCriterion}
    sgn\left[\left((\vec{s} - \vec{s'}) \times (\vec{s} - \vec{s^*})\right) \cdot \vec{s} \right] > 0
\end{equation}

\subsection{Latitude and Longitude Unit Vectors}\label{appendix:geomCalc:latlon}
For a point $p$ with Cartesian coordinates $(X, Y, Z)$ on a sphere of radius $\rho$, we consider the circle of constant latitude $\theta$ as a curve $C_{\theta}$. By differentiating the parametrization of $C_{\theta}$ with respect to the longitude parameter $\varphi$, we obtain the longitude unit tangent vector at $p$, denoted by $\hat{\varphi}_p$, using the spherical-to-Cartesian coordinate transformation:
\begin{equation}\label{appendix:geomCalc:latlon:lon}
    C_{\theta} = 
    \begin{bmatrix}
    \sqrt{\rho^2-Z^2} \cos{\varphi} \\
    \sqrt{\rho^2-Z^2} \sin{\varphi} \\
    Z \\
    \end{bmatrix} \Longrightarrow
    \frac{\frac{\partial C_{\theta}}{\partial \varphi}}{\|\frac{\partial C_{\theta}}{\partial \varphi}\|} = 
    \begin{bmatrix}
    - \sin{\varphi} \\
    \cos{\varphi} \\
    0 \\
    \end{bmatrix} \Longrightarrow \hat{\varphi_p} = 
    \begin{bmatrix}
    \frac{-Y}{\sqrt{X^2+Y^2}} \\
    \frac{X}{\sqrt{X^2+Y^2}} \\
    0 \\
    \end{bmatrix}
\end{equation}
Similarly, we consider the curve of constant longitude $\varphi$ as a curve $C_{\varphi}$. By differentiating the parametrization of $C_{\varphi}$ with respect to the latitude parameter $\theta$, we obtain the latitude unit tangent vector at $p$, denoted by $\hat{\theta}_p$, using the spherical-to-Cartesian coordinate transformation:
\begin{equation}\label{appendix:geomCalc:latlon:lat}
    C_{\varphi} = 
    \begin{bmatrix}
    \rho \frac{X}{\sqrt{X^2+Y^2}} \cos{\theta} \\
    \rho \frac{Y}{\sqrt{X^2+Y^2}} \cos{\theta} \\
    \rho \sin{\theta} \\
    \end{bmatrix} \Longrightarrow
    \frac{\frac{\partial C_{\varphi}}{\partial \theta}}{\|\frac{\partial C_{\varphi}}{\partial \theta}\|} = 
    \begin{bmatrix}
    - \frac{X}{\sqrt{X^2+Y^2}} \sin{\theta} \\
    - \frac{Y}{\sqrt{X^2+Y^2}} \sin{\theta} \\
    \cos{\theta} \\
    \end{bmatrix} \Longrightarrow \hat{\theta} = 
    \begin{bmatrix}
    \frac{-XZ}{\rho \sqrt{X^2+Y^2}} \\
    \frac{-YZ}{\rho \sqrt{X^2+Y^2}} \\
    \frac{\sqrt{X^2+Y^2}}{\rho} \\
    \end{bmatrix}
\end{equation}

\subsection{Region Flows}\label{appendix:geomCalc:geodesicRegion}
To determine whether a geodesic flow—defined by source and destination position vectors $\vec{u}$ and $\vec{v}$, respectively—intersects a given region, we first define a representative point $p$ for the region. $p$ chosen as the intersection of the region’s midpoint latitude and longitude lines. We then project $p$ onto the geodesic curve connecting $\vec{u}$ and $\vec{v}$ to obtain the closest point on the curve, denoted by $q$. Let $\vec{p}$ and $\vec{q}$ denote the position vectors of $p$ and $q$, respectively. We compute $\vec{q}$ as follows:
\begin{equation}\label{appendix:geomCalc:geodesicRegion:projection}
    \hat{n} = \frac{\vec{u} \times \vec{v}}{\|\vec{u} \times \vec{v}\|} \Longrightarrow \vec{q} = \rho \frac{\vec{p}-(\vec{p} \cdot \hat{n})\hat{n}}{\|\vec{p}-(\vec{p} \cdot \hat{n})\hat{n}\|}
\end{equation}
where $\rho$ denotes the spherical radius, and $\hat{n}$ is the normal vector of the plane defined by $u$, $v$, and the and the Earth’s center. We then determine the region identifier to which $q$ belongs. Specifically, the latitude and longitude indices of $q$ are $\left\lfloor \frac{\varphi_q}{l_\varphi} \right\rfloor$ and $\left\lfloor \frac{\theta_q+\frac{\pi}{2}}{l_\theta} \right\rfloor$, respectively, where $\varphi_q$, and $\theta_q$ denote the longitude and latitude of $q$, and $l_\varphi$ and $l_\theta$ the uniform angular step sizes in longitude and latitude. If these indices match those of the region, we conclude that the flow passes through the region.

\section{Mathematical Analysis}\label{appendix:analysis}
\subsection{Model}
We assume the manifold is a 2-dimensional plane with the usual Euclidean metric. 
We denote the satellites as points (nodes) on the plane, with $V$ being the set of all satellites. 
For a satellite $u \in V$, $\mathbf{x}_v = (x_v^0, x_v^1)$ denotes the Cartesian coordinates of the satellite on the plane. 
We assume the satellites are stationary for the analysis; operationally, our model captures the properties of the satellite network over a small time window (\~ a few ms) when the satellites are relatively stationary with respect to each other. 
A satellite has a maximum range of $R > 0$, i.e., it can connect with another satellite so long as the Euclidean distance between the satellites is less than $R$. 
We assume satellites have a degree bound of $\delta$; in practice $\delta = 4$ for the Starlink system. 
Let $E$ denote the set of all edges (connections) between the satellites. 
Satellites are homogeneously distributed over the plane, with a density of $\rho > 0$ and a distortion of $0 < \eta < 1/(2\rho)$ defined as follows:  
a circle of radius $\eta$ on the plane centered at any point $(i/\rho, j/\rho)$ where $i, j \in \mathbb{Z}$ contains exactly one satellite. 
If satellites are arranged in an approximate 2-dimensional square grid patter, the $\rho$ parameter captures the interval length between adjacent satellites on the grid, while the $\eta$ parameter captures by how much the satellite locations deviate from being a true grid. 
For a fixed $\rho, \eta$, our definition implies that a satellite is guaranteed to have another satellite within a distance of $1/\rho + 2\eta $ from it along each of the $x$, negative $x$, $y$ and negative $y$ axis directions on the plane. 

We denote the set of demands by $\mathcal{D} = \{ (s, d): s, d \in V, s \neq d \}$. 
If $(s, d) \in \mathcal{D}$ we say there is a demand from source $s$ to destination $d$. 
We do not associate any bandwidth with the demands, and assume the system has adequate capacity to handle the demands. 
A path is a sequence of connected satellites. 
For any path $u_0, u_1, \ldots, u_k$ with $u_i \in V$ for all $i = 0, 1, \ldots, k$, the latency of sending a packet from $u_0$ to $u_k$ is given by $\sum_{i=0}^{k-1} \| \mathbf{x}_{u_i} - \mathbf{x}_{u_{i+1}} \|_2$.  
The geodesic distance between the source $u_0$ and the destination $u_k$ of the path is $\| \mathbf{x}_{u_0} - \mathbf{x}_{u_k} \|_2$. 
The stretch of the path is the ratio $\sum_{i=0}^{k-1} \frac{\| \mathbf{x}_{u_i} - \mathbf{x}_{u_{i+1}} \|_2}{\| \mathbf{x}_{u_0} - \mathbf{x}_{u_k} \|_2}$ of the path length to the length of the geodesic path. 

For any demand $(s, d) \in \mathcal{D}$, we assume the packets from $s$ to $d$ are routed via the shortest path from $s$ to $d$ on the topology $E$ (with ties broken arbitrarily).
The weight of an edge in $E$ is equal to the Euclidean distance between the endpoints of the edge in the shortest path computations. 
We denote by $\sigma_E(s, d)$ the stretch of the shortest path from $s$ to $d$ on the topology $E$.  
The primary metric of interest is the worst case stretch $\max_{(s, d) \in \mathcal{D}} \sigma_E(s, d)$. 

\subsection{Converse}

We show a lower bound on the stretch under the following assumption: for any $(s, d) \in \mathcal{D}$, the shortest path from $s$ to $d$ on the topology $E$ is contained within the  region $P_{(s,d)} = \{\mathbf{x} \in \mathbb{R}^2: \exists~ \mathbf{y} \text{ on the straight line between } s \text{ and } d \text{ such that } \| \mathbf{x} - \mathbf{y} \|_2 \leq R \} $.

Consider any continuous, compact (closed, bounded) region $\mathcal{A} \subset \mathbf{R} \times \mathbf{R}$. 
Let $\mathcal{D}_\mathcal{A} \subseteq \mathcal{D}$ be the maximal set of demands contained in $\mathcal{A}$, i.e., (1) for any $(s, d) \in \mathcal{D}_\mathcal{A}$ the region $P_{(s,d)} \subseteq \mathcal{A}$, and (2) for any $(s, d) \in \mathcal{D} \backslash \mathcal{D}_\mathcal{A}$ the region $P_{(s,d)} \nsubseteq \mathcal{A}$. 
For each demand $(s, d) \in \mathcal{D}_\mathcal{A}$, let $ l(s,d) = \| \mathbf{x}_{s} - \mathbf{x}_{d}\|_2 $ be the length of the demand. 

Let $\mathrm{cell}(i,j)$ for $i, j \in \mathbb{Z}$ denote the region $[i/\rho, (i+1)/\rho] \times [j/\rho, (j+1)/\rho]$. 
Let $q(\mathcal{A}) = \cup_{i, j \in \mathbb{Z}: \mathrm{cell}(i,j) \cap \mathcal{A} \neq \{ \}} \mathrm{cell}(i,j)$ denote a discretization of $\mathcal{A}$ that includes all cells that have a nonempty intersection with $\mathcal{A}$.  
Let $\alpha(\mathcal{A})$ denote the area of $q(\mathcal{A})$. 
There are $\alpha(\mathcal{A}) \rho^2$ cells in $q(\mathcal{A})$ and at most $2\alpha(\mathcal{A})\rho^2 + 2$ satellites in $q(\mathcal{A})$ and hence in $\mathcal{A}$.  
This implies there are at most $(2\alpha(\mathcal{A}) \rho^2 + 2) \delta$ edges in the region $\mathcal{A}$.
Here, we say an edge $(u, v) \in E$ is in the region $\mathcal{A}$ if the line segment joining $\mathbf{x}_u$ and $\mathbf{x}_v$ has a non-zero intersection with $\mathcal{A}$.  
Let $E_\mathcal{A} \subseteq E$ denote the edges in $\mathcal{A}$. 
Each edge $(u, v) \in E_\mathcal{A}$ has an associated orientation angle $\theta(u ,v) \in [0, 2\pi)$ defined as the angle between the vector $\mathbf{x}_v - \mathbf{x}_u$ and the $x$-axis, with the angle measured in an anti-clockwise direction starting from the $x$-axis.  
We similarly define orientation angle $\theta(s,d)$ for each demand $(s,d) \in \mathcal{D}_\mathcal{A}$ as the angle of the vector $\mathbf{x}_d - \mathbf{x}_s$. 
Each edge $(u, v) \in E$ also has an associated length $\| \mathbf{x}_u - \mathbf{x}_v\|_2$ of at most $R$. 

For a demand $(s, d)$ with an orientation of $\theta(s, d)$, a lower bound on stretch can be given as follows.  
Consider a parameter $\epsilon \in (0, \pi)$. 
For any edge $(u, v) \in E_\mathcal{A}$, let $\mathcal{D}_{\mathcal{A}}^{(u, v)}(\epsilon) \subseteq \mathcal{D}_{\mathcal{A}}$ be the demands oriented at an angle that is at most $\epsilon$ from $(u, v)$. 
For any demand $(s, d) \in \mathcal{D}_{\mathcal{A}}$, let $E_\mathcal{A}^{(s,d)}(\epsilon)$ be the edges that are at an angle of at most $\epsilon$ from $(s, d)$. 
If an edge $(u, v) \in E_\mathcal{A}^{(s,d)}(\epsilon)$, we say that $(u,v)$ is $\epsilon$-near $(s,d)$. 
If a demand $(s,d)$ has sufficiently many $\epsilon$-near edges, the stretch for the demand can be low (though it is not guaranteed).
However, if a demand $(s, d)$ does not have adequately many $\epsilon$-near edges, the stretch for the demand must necessarily be high. 
We formalize this intuition below. 

\begin{theorem}
Consider any topology $E$. 
For any continuous, compact region $\mathcal{A} \subset \mathbb{R} \times \mathbb{R}$ with a non-empty $\mathcal{D}_\mathcal{A}$ and parameter $\epsilon \in [0, \pi)$, there exists a demand $(s,d) \in \mathcal{D}_\mathcal{A}$ with a stretch 
\begin{align}
\sigma_E(s,d) \geq \frac{ \min \left(l(s,d), \frac{R \lambda_\mathcal{A} (2\alpha(\mathcal{A}) \rho^2 + 1) \delta}{ |\mathcal{D}_\mathcal{A}|} \right)  + \frac{l(s,d) -   \min \left(l(s,d), \frac{R \lambda_\mathcal{A} (2\alpha(\mathcal{A}) \rho^2 + 1) \delta}{ |\mathcal{D}_\mathcal{A}|} \right)}{| \cos(\epsilon) |} }{ l(s,d) },
\end{align} 
where $\lambda_\mathcal{A}$ is the maximum number of demands in $\mathcal{D}_\mathcal{A}$ that are oriented at an angle within $\epsilon$ of any reference direction.   
\end{theorem}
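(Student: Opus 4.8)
The plan is a two-phase argument: an averaging (pigeonhole) step that locates a demand whose direction is poorly served by the edges lying in $\mathcal{A}$, followed by a geometric projection step that turns this scarcity of well-oriented edges into a stretch lower bound for that demand. The enabling fact throughout is the standing confinement assumption: the shortest path for any $(s,d)$ stays inside $P_{(s,d)} \subseteq \mathcal{A}$, so every edge it uses lies in $E_\mathcal{A}$, and $E_\mathcal{A}$ has at most $(2\alpha(\mathcal{A})\rho^2+1)\delta$ edges, each of Euclidean length at most $R$ (this is exactly the edge count established above).

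For the counting step, for each $(s,d) \in \mathcal{D}_\mathcal{A}$ set $\ell_{\mathrm{near}}(s,d) = \sum_{e \in E_\mathcal{A}^{(s,d)}(\epsilon)} \|e\|$, the total length of all edges of $\mathcal{A}$ that are $\epsilon$-near $(s,d)$, where $\|e\|$ denotes the length of $e$. I would double-count $\sum_{(s,d) \in \mathcal{D}_\mathcal{A}} \ell_{\mathrm{near}}(s,d)$ by swapping the order of summation: it equals $\sum_{e \in E_\mathcal{A}} \|e\| \cdot \#\{(s,d) \in \mathcal{D}_\mathcal{A} : e \text{ is } \epsilon\text{-near } (s,d)\}$. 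Since ``$e$ is $\epsilon$-near $(s,d)$'' means precisely that $\theta(s,d)$ lies within $\epsilon$ of the direction of $e$, the inner count is at most $\lambda_\mathcal{A}$ by definition of $\lambda_\mathcal{A}$ (with the orientation conventions of the preceding discussion). Combining this with $\sum_{e \in E_\mathcal{A}} \|e\| \le R(2\alpha(\mathcal{A})\rho^2+1)\delta$ gives $\sum_{(s,d)} \ell_{\mathrm{near}}(s,d) \le R\lambda_\mathcal{A}(2\alpha(\mathcal{A})\rho^2+1)\delta$, and averaging over the $|\mathcal{D}_\mathcal{A}|$ demands produces a demand $(s^\star,d^\star)$ with $\ell_{\mathrm{near}}(s^\star,d^\star) \le T := R\lambda_\mathcal{A}(2\alpha(\mathcal{A})\rho^2+1)\delta / |\mathcal{D}_\mathcal{A}|$.

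For the projection step, fix this $(s^\star,d^\star)$, write $l = l(s^\star,d^\star)$, and let $\hat e$ be the unit vector in the direction $\theta(s^\star,d^\star)$. Decompose the edges traversed by the shortest path $u_0 \to \cdots \to u_k$ into near edges (angle at most $\epsilon$ from $\hat e$) and far edges, with total lengths $N$ and $F$; then $P := \sum_i \|\mathbf{x}_{u_i} - \mathbf{x}_{u_{i+1}}\|_2 = N + F$ and the stretch is $P/l$. Because the path is confined to $\mathcal{A}$, its near edges form a subset of the $\epsilon$-near edges of $\mathcal{A}$, so $N \le \ell_{\mathrm{near}}(s^\star,d^\star) \le T$. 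Projecting the (telescoping) displacement onto $\hat e$ gives $l$; bounding each near edge's projection by $\|e\|$ and each far edge's projection by $\|e\|\,|\cos\epsilon|$ (a far edge makes angle strictly greater than $\epsilon$ with $\hat e$) yields $l \le N + F\,|\cos\epsilon|$, hence $F \ge (l-N)/|\cos\epsilon|$ whenever $N < l$. Substituting into $P = N + F$ and noting that $N \mapsto N + (l-N)/|\cos\epsilon|$ is nonincreasing on $[0,l]$ (since $|\cos\epsilon| \le 1$) together with $N \le \min(l,T)$ gives $P \ge \min(l,T) + (l-\min(l,T))/|\cos\epsilon|$; and in the degenerate case $N \ge l$ we get $T \ge N \ge l$, so $\min(l,T)=l$ and $P \ge N \ge l$ matches the same bound. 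Dividing by $l$ yields the theorem.

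The main obstacle I anticipate is keeping the two geometric inequalities honest. First, the double-count needs the relation ``$\epsilon$-near'' to be genuinely symmetric between edges and demands and to be set up consistently for directed versus undirected edges and demands; this is what pins down the exact constant ($+1$ versus $+2$, and any stray factor of two). Second, the projection bound must cover $\epsilon \ge \pi/2$ and in particular the boundary $\epsilon = \pi/2$ where $|\cos\epsilon| \to 0$: there one must observe that for the selected demand the projection inequality forces $N \ge l$ and hence $T \ge l$, so $\min(l,T)=l$ and no division by zero occurs. A secondary but essential point is to invoke the confinement assumption exactly at the step where we pass from ``few $\epsilon$-near edges in $\mathcal{A}$'' to ``few $\epsilon$-near edges available to the shortest path'', since without it a path edge could be near yet lie outside $E_\mathcal{A}$ and escape the count.
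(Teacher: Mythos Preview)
Your proposal is correct and follows essentially the same two-step structure as the paper: a double-counting/pigeonhole argument over $E_\mathcal{A}$ to locate a demand with few $\epsilon$-near edges, followed by a projection onto the demand direction to convert that scarcity into a stretch bound. The only cosmetic difference is that the paper double-counts the \emph{number} of near edges and multiplies by $R$ at the end, whereas you double-count total near-edge \emph{length} directly; your explicit telescoping-projection inequality $l \le N + F\,|\cos\epsilon|$ is in fact a cleaner justification of what the paper states informally as the ``best case scenario''.
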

\begin{proof}
The minimum number of edges needed in $E_\mathcal{A}^{(s,d)}(\epsilon)$  to satisfy a demand $(s, d)$ is $\frac{l(s,d)}{ R }$.
The minimum occurs if all the edges in $E_\mathcal{A}^{(s,d)}(\epsilon)$ are aligned precisely with $(s, d)$.
If $|E_\mathcal{A}^{(s,d)}(\epsilon)| < \frac{l(s,d)}{R}$, it implies there is at least one edge in the $s$ to $d$ shortest path who orientation angle differs from $\theta(s,d)$ by more than $\epsilon$. 
If $|E_\mathcal{A}^{(s,d)}(\epsilon)| = 0$, it implies the stretch of $(s, d)$ is at least $\frac{1}{ |\cos(\epsilon)| }$.  
Note that 
\begin{align}
\sum_{(u, v) \in E_\mathcal{A}} | \mathcal{D}_\mathcal{A}^{(u,v)}(\epsilon) | = \sum_{(s, d) \in \hat{\mathcal{D}}_\mathcal{A}} | E_\mathcal{A}^{(s, d)}(\epsilon)|.
\end{align}   

Suppose the demands are such that for any orientation of an edge $(u, v)$ we have $|\mathcal{D}_\mathcal{A}^{(u,v)}(\epsilon)|$ is less than $\lambda_\mathcal{A}$. 
Then 
\begin{align}
\sum_{(u, v) \in E_\mathcal{A}} | \mathcal{D}_\mathcal{A}^{(u,v)}(\epsilon) | \leq \lambda_\mathcal{A} |E_\mathcal{A}| = \lambda_\mathcal{A} (2\alpha(\mathcal{A}) \rho^2 + 2) \delta.
\end{align} 
This implies  $\sum_{(s, d) \in \mathcal{D}_\mathcal{A}} | E_\mathcal{A}^{(s, d)}(\epsilon)| \leq \lambda_\mathcal{A} (2\alpha(\mathcal{A}) \rho^2 + 2) \delta$.  
Therefore, there exists at least one demand $(s, d) \in \mathcal{D}_\mathcal{A}$ such that $|E_\mathcal{A}^{(s,d)}(\epsilon)| \leq \frac{\lambda_\mathcal{A} (2\alpha(\mathcal{A}) \rho^2 + 2) \delta}{ |\mathcal{D}_\mathcal{A}| }$.
To lower bound the stretch for this demand, we consider the best case scenario in which all $\frac{\lambda_\mathcal{A} (2\alpha(\mathcal{A}) \rho^2 + 2) \delta}{ |\mathcal{D}_\mathcal{A}| }$ edges are aligned in the exact same direction as $(s,d)$ and the remaining edges on the path from $s$ to $d$ are aligned at an angle exactly $\epsilon$ away from $(s,d)$.  
The total length of the edges that are aligned along $(s,d)$ is $ \min \left(l(s,d), \frac{R \lambda_\mathcal{A} (2\alpha(\mathcal{A}) \rho^2 + 2) \delta}{ |\mathcal{D}_\mathcal{A}|} \right) $.  
The total length of the edges that are aligned at an angle $\epsilon$ away from $(s,d)$ is $\frac{l(s,d) -   \min \left(l(s,d), \frac{R \lambda_\mathcal{A} (2 \alpha(\mathcal{A}) \rho^2 + 2) \delta}{ |\mathcal{D}_\mathcal{A}|} \right)}{|\cos(\epsilon)|}$.  
This gives a lower bound on the stretch as mentioned in the theorem.  
\end{proof} 

\subsection{Achievability}
To enable tractability of analysis, we consider a slight variation of the Starfield algorithm (denoted by $\mathrm{ALG}$) as explained below. 
Consider a set of demands $\mathcal{D}$. 
Let $E_\mathrm{ALG}$ be the set of edges constructed by $\mathrm{ALG}$. 
We are interested in deriving an upper bound for the stretch $\sigma_{E_\mathrm{ALG}}(s,d)$ of each demand $(s,d) \in \mathcal{D}$. 
For a demand $(s,d) \in \mathcal{D}$, if there are no other heavy demands near the straight line from $s$ to $d$, we show that the upper bound $\sigma_{E_\mathrm{ALG}}(s,d)$ is close to 1. 

Let $\tau \in  (1/\rho) \mathbb{N}$ be a tessellation parameter. 
We partition the plane into approximate square regions of side length $\tau$ as follows. 
For $i, j \in \mathbb{Z}$, we denote by $R_{i, j}$ the $(i, j)$-th region and by $\mathcal{R} = \{ R_{i,j}: i, j \in \mathbb{Z} \}$ the set of all regions in the plane. 
The $(i,j)$-th region $R_{i,j}$ includes each satellite $v \in V$ that is closest to a point in $\{ i\tau, i\tau + 1/\rho, i\tau + 2/\rho, \ldots, i\tau + (\tau - 1) / \rho \} \times \{ j\tau, j\tau + 1/\rho, j\tau + 2/\rho, \ldots, j\tau + (\tau - 1) / \rho \} $. 
Satellites in $R_{i,j}$ that are at the boundary of the region, i.e., closest to the points $(i\tau, j\tau), (i\tau, j\tau + 1/\rho), \ldots, (i\tau, j\tau + (\tau\rho - 1)/\rho)$ or $(i\tau + (\tau\rho-1)/\rho, j \tau), (i\tau + (\tau\rho-1)/\rho, j \tau + 1/\rho), \ldots, (i\tau + (\tau\rho-1)/\rho, j \tau + (\tau\rho-1)/\rho)$ or $(i\tau, j\tau), (i\tau + 1/\rho, j\tau), \ldots, (i\tau + (\tau\rho-1)/\rho, j\tau)$ or $(i\tau, j\tau + (\tau\rho-1)/\rho), (i\tau + 1/\rho, j\tau + (\tau\rho-1)/\rho), \ldots, (i\tau + (\tau\rho-1)/\rho, j\tau + (\tau\rho-1)/rho)$ are called boundary satellites of the region $R_{i,j}$.    
We let $\partial R_{i,j}$ denote the boundary satellites of $R_{i,j}$. 

The topology of $\mathrm{ALG}$ is constructed as follows. 
For each region $R_{i,j}$, we compute a primary direction $\mathbf{p}(R_{i,j})$ and a secondary direction $\mathbf{s}(R_{i,j})$ with $\mathbf{p}(R_{i,j}) \neq \mathbf{s}(R_{i, j})$ or  $\mathbf{p}(R_{i,j}) \neq -\mathbf{s}(R_{i,j})$. 
The primary and the secondary directions are computed from the Riemann metric at the region as follows.
At any point $\mathbf{x} \in \mathbb{R}^2$, let $g_\mathbf{x} \in \mathbb{S}_{++}^2$ be the metric where $\mathbb{S}_{++}^2$ denotes the set of $2\times 2$ positive-definite matrices. 
Here, $g_\mathbf{x}(i, j)$ for $i=0,1$ and $j=0,1$ is the inner product of the standard normal vectors along dimensions $i$ and $j$ respectively. 
Let $\mathbf{e}_\mathbf{x}$ be the eigenvector corresponding to the minimum eigenvalue of $g_\mathbf{x}$. 
Since $\mathbf{e}_\mathbf{x}$ and $-\mathbf{e}_\mathbf{x}$ are both eigenvectors corresponding to the minimum eigenvalue, for consistency we choose $\mathbf{e}_\mathbf{x}$ as the vector having a positive Euclidean inner product with the standard normal vector along the $x$-axis. 
If the inner product is zero, we choose the vector having a positive Euclidean inner product with the standard normal vector along the $y$-axis. 
The primary direction is defined as 
\begin{align}
\mathbf{p}(R_{i,j}) = \frac{ \sum_{\substack{v \in R_{i,j} \\ v \neq \partial R_{i,j}}} \mathbf{e}_{\mathbf{x}_v}}{ |\{v \in R_{i,j}: v \neq \partial R_{i,j} \}|}. 
\end{align}
We define $\mathbf{s}(R_{i,j}) $ as the vector orthogonal to $\mathbf{p}(R_{i,j})$, in a direction $90^{\circ}$ anti-clockwise to $\mathbf{p}(R_{i,j})$.  

Consider the continuous region $Q_{i,j} = [i\tau, i\tau + (\tau\rho - 1)/\rho] \times [j\tau, j\tau + (\tau\rho - 1)/\rho]$. 
As before, we call the points on the periphery of this grid as boundary points $\partial Q_{i,j}$ of $Q_{i,j}$. 
We draw a square grid with a line-spacing of $R/2$ and with the grid rotated such that a grid line is parallel to $\mathbf{p}(R_{i,j})$ on this region. 
Let $\bar{Q}_{i,j} \subset Q_{i,j}$ be the set of points where either (1) two grid lines intersect with each other, or (2) a grid line intersects with $\partial Q_{i,j}$. 
For any $\mathbf{x} \in \bar{Q}_{i,j}$ we denote its neighbors by $\Gamma_\mathbf{x} \subset \bar{Q}_{i,j}$, which are the intersection points that are adjacent to $\mathbf{x}$ on the grid or on the boundary. 
Each point $\mathbf{x} \in \bar{Q}_{i,j}$ that is in the interior, i.e., $\mathbf{x} \notin \partial Q_{i,j}$, has 4 neighbors in $\bar{Q}_{i,j}$.  
A boundary intersection point $\mathbf{x} \in \partial Q_{i,j} \cap \bar{Q}_{i,j}$ has either 2 or 1 neighbor in $\bar{Q}_{i,j}$.  
For any $\mathbf{x} \in \bar{Q}_{i,j}$, let $v_\mathbf{x} \in R_{i,j}$ denote the satellite that is the closest to the point $\mathbf{x}$. 
To construct the topology for each $\mathbf{x} \in \bar{Q}_{i,j}$ we connect $v_\mathbf{x}$ to $v_\mathbf{y}$ where $\mathbf{y} \in \Gamma_\mathbf{x}$. 
For any $\mathbf{x} \in \bar{Q}_{i,j}$, note that the distance between $\mathbf{x}$ and $\mathbf{x}_{v_\mathbf{x}}$ is at most $2\eta + 1/(\sqrt{2}\rho)$. 
The distance between adjacent intersection points on the grid is $R/2$. 
Therefore, the distance between any two connected satellites is at most $R/2 + 2(2\eta + 1/(\sqrt{2}\rho))$ which is less than the maximum range of $R$ if the density $\rho$ is sufficiently large. 
Furthermore, for a sufficiently large $\rho$ each intersection point $\mathbf{x} \in \bar{Q}_{i,j}$ has a unique closest satellite $v_\mathbf{x}$. 
This ensures the degree of each satellite is at most 4. 

We repeat the above process for all regions in the plane.
A technical difficulty arises in this construction. 
Consider adjacent regions $Q_{i,j}$ and $Q_{i+1,j}$ with a shared boundary line. 
If $\mathbf{p}(R_{i,j}) \neq \mathbf{p}(R_{i+1,j})$, the number of intersection points in $\bar{Q}_{i,j}$ on the shared boundary can be different from the number of intersection points in $\bar{Q}_{i+1,j}$ on the shared boundary. 
To ensure connectivity between regions, we connect the boundary points of the $(i,j)$ region to the boundary points of the $(i+1, j)$ region in the following way. 
For each point $\mathbf{x}$ in the shared boundary, let $\mathbf{y}_\mathbf{x}^{(i,j)} \in \bar{Q}_{i,j}$ and $\mathbf{y}_\mathbf{x}^{(i+1,j)} \in \bar{Q}_{i+1,j}$ be the boundary points that are the closest to $\mathbf{x}$ from regions $(i,j)$ and $(i+1, j)$ respectively. 
For each $\mathbf{x}$, we connect $v_{\mathbf{y}_\mathbf{x}^{(i,j)}}$ with $v_{\mathbf{y}_\mathbf{x}^{(i+1,j)}}$.
The above connections are always possible to make, since the number of intersection points on the shared boundary for the two regions differ by a factor of at most $\sqrt{2}$. 
The length of the newly added connections is at most $R/2$. 
 
Another technical difficulty that arises is that it is possible for only a subset of the satellites to be connected in the above construction. 
When choosing a large grid spacing ($R/2$, which is  $\gg 1/\rho$), the density of intersection points on the grid is smaller than $\rho$. 
This forces some of the satellites to be omitted from the topology. 
However, if we consider a denser grid with a smaller spacing that is comparable to $1/\rho$, rotating the grid becomes difficult, i.e., the edge lengths on a rotated grid can become a constant factor larger than the desired length defeating the purpose of grid rotation.
A solution to this technical issue is to construct multiple grids (each grid offset by a small distance) each containing a disjoint set of satellites. 
Assuming a ground station has sufficient range to talk to nearby satellites on any of the multiple grids, routing can be done on a per-grid basis. 
Therefore, in the following we ignore the possible presence of multiple grids, and focus our analysis on just a single grid per region.    

Consider a demand $(s,d) \in \mathcal{D}$.
We upper bound the length $l(p_{s,d})$ of the shortest path $p_{s,d}$ from $s$ to $d$ on our constructed topology as follows. 
Consider the straight line $\gamma:[0,1] \rightarrow \mathbb{R}^2$ joining $s$ to $d$ with $\gamma(0) = \mathbf{x}_s$ and $\gamma(1) = \mathbf{x}_d$. 
Let this line pass through the regions $(i_1, j_1), (i_2, j_2), \ldots, (i_k, j_k)$ in that order, with $\mathbf{x}_s \in Q_{i_1, j_1}$ and $\mathbf{x}_d \in Q_{i_k, j_k}$. 
Each region $h \in \{1, 2, \ldots, k\}$ has an entry point $\mathbf{x}^h_\mathrm{in}$ and an exit point $\mathbf{x}^h_\mathrm{out}$ on the line, i.e., if $\gamma(t^h_\mathrm{in}) = \mathbf{x}^h_\mathrm{in}$ and $\gamma(t^h_\mathrm{out}) = \mathbf{x}^h_\mathrm{out}$, then none of the points in $\{ \gamma(t): 0 \leq t \leq t^h_\mathrm{in} \text{ or } t^h_\mathrm{out} \leq t \leq 1 \}$ belong to $Q_{i_h, j_h}$.  
For the first region we have $\mathbf{x}_s = \mathbf{x}^1_\mathrm{in}$, and $\mathbf{x}_d = \mathbf{x}^k_\mathrm{out}$ for the last. 
Next, for any $h \in \{1, 2, \ldots, k\}$ let $\mathbf{y}^h_\mathrm{in} \in \bar{Q}_{i_h, j_h}$ be the intersection point that is the closest to $\mathbf{x}^h_\mathrm{in}$ in region $(i_h, j_h)$.  
Note that $\mathbf{y}^h_\mathrm{in} \in \partial Q_{i_h, j_h}$ for $h>1$.  
Similarly, let $\mathbf{y}^h_\mathrm{out} \in \bar{Q}_{i_h, j_h}$ be the intersection point that is the closest to $\mathbf{x}^h_\mathrm{out}$. 
As before $\mathbf{y}^h_\mathrm{out} \in \partial Q_{i_h, j_h}$ for $h < k$. 
Consider the path $s, v_{\mathbf{y}^1_\mathrm{out}}, v_{\mathbf{y}^2_\mathrm{in}}, v_{\mathbf{y}^2_\mathrm{out}}, \ldots, v_{\mathbf{y}^{k-1}_\mathrm{in}}, v_{\mathbf{y}^{k-1}_\mathrm{out}}, v_{\mathbf{y}^{k}_\mathrm{in}}, d$.
The theorem below bounds the length of this path. 
For two vectors $\mathbf{x}, \mathbf{y}$, let $\theta(\mathbf{x}, \mathbf{y}) \in [0, \pi]$ be the angle between $\mathbf{x}$ and $\mathbf{y}$. 

\begin{theorem}
\label{thm:upper bound}
For a demand $(s,d) \in \mathcal{D}$ and $\rho $ sufficiently large as described above, the length of the shortest path from $s$ to $d$ on our topology can be bounded as 
\begin{align}
l(p_{s,d}) \leq \left(1 + \frac{4}{R}(2\eta + 1/(\sqrt{2}\rho)) \right) \left( (R + \| \mathbf{x}_s - \mathbf{y}^1_\mathrm{out} \|(|\cos(\theta(\mathbf{x}_s - \mathbf{y}^1_\mathrm{out}, \mathbf{p}(R(i_1, j_1))))| \right. \notag \\
 + |\sin(\theta(\mathbf{x}_s - \mathbf{y}^1_\mathrm{out}, \mathbf{p}(R(i_1, j_1))))| ) ) \notag \\
+ \sum_{h=2}^{k-1} (R + \| \mathbf{y}^h_\mathrm{in} - \mathbf{y}^h_\mathrm{out} \|(|\cos(\theta(\mathbf{y}^h_\mathrm{in} - \mathbf{y}^h_\mathrm{out}, \mathbf{p}(R(i_h, j_h))))| + |\sin(\theta(\mathbf{y}^h_\mathrm{in} - \mathbf{y}^h_\mathrm{out}, \mathbf{p}(R(i_h, j_h))))| ) )\notag \\
\left. + (R + \| \mathbf{x}_d - \mathbf{y}^k_\mathrm{in} \|(|\cos(\theta(\mathbf{x}_d - \mathbf{y}^k_\mathrm{in}, \mathbf{p}(R(i_k, j_k))))| + |\sin(\theta(\mathbf{x}_d - \mathbf{y}^k_\mathrm{in}, \mathbf{p}(R(i_k, j_k))))| ) ) \right) + (k-1)\frac{R}{2}.  
\end{align} 
\end{theorem}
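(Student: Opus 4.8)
The plan is to bound $l(p_{s,d})$ by exhibiting the explicit path $s, v_{\mathbf{y}^1_\mathrm{out}}, v_{\mathbf{y}^2_\mathrm{in}}, v_{\mathbf{y}^2_\mathrm{out}}, \ldots, v_{\mathbf{y}^{k-1}_\mathrm{in}}, v_{\mathbf{y}^{k-1}_\mathrm{out}}, v_{\mathbf{y}^{k}_\mathrm{in}}, d$ and bounding its length, since the true shortest path is no longer. I would split this path into two kinds of pieces: the \emph{intra-region} sub-paths (inside region $1$: from $s$ to $v_{\mathbf{y}^1_\mathrm{out}}$; inside region $h$ for $1<h<k$: from $v_{\mathbf{y}^h_\mathrm{in}}$ to $v_{\mathbf{y}^h_\mathrm{out}}$; inside region $k$: from $v_{\mathbf{y}^k_\mathrm{in}}$ to $d$), and the $k-1$ \emph{bridge} edges $\{v_{\mathbf{y}^h_\mathrm{out}}, v_{\mathbf{y}^{h+1}_\mathrm{in}}\}$. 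By taking the shared-boundary point of the construction to be the point where the straight line $\gamma$ crosses from $Q_{i_h,j_h}$ into $Q_{i_{h+1},j_{h+1}}$, each bridge edge is exactly one of the inter-region connections added in the construction, hence has Euclidean length at most $R/2$ between two satellites; summing these contributes the additive, uninflated $(k-1)\,R/2$ term. It then remains to bound each intra-region sub-path.

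Next I would estimate the nominal (grid) length of the intra-region sub-path inside a region $R_{i_h,j_h}$. Its grid lines are aligned with $\mathbf{p}(R_{i_h,j_h})$ and $\mathbf{s}(R_{i_h,j_h})$ and spaced $R/2$ apart, so the graph distance between two lattice intersection points equals the $\ell_1$ distance of their displacement in the $(\mathbf{p},\mathbf{s})$ frame, which for a displacement $\Delta$ making angle $\theta$ with $\mathbf{p}(R_{i_h,j_h})$ is exactly $\|\Delta\|\,(|\cos\theta| + |\sin\theta|)$. Since $\mathbf{y}^h_\mathrm{in}$ and $\mathbf{y}^h_\mathrm{out}$ lie on $\partial Q_{i_h,j_h}$, they lie on grid lines but not necessarily on the lattice, so I would first snap each to the nearest interior lattice node along its own grid line — one grid hop of length at most $R/2$, costing at most $R$ in total — and then route a monotone staircase (using boundary edges of $\bar Q_{i_h,j_h}$ where needed to stay inside $Q_{i_h,j_h}$) between the snapped endpoints. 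This yields a nominal intra-region length of at most $R + \|\mathbf{y}^h_\mathrm{in} - \mathbf{y}^h_\mathrm{out}\|\,(|\cos\theta^h| + |\sin\theta^h|)$ with $\theta^h = \theta(\mathbf{y}^h_\mathrm{in} - \mathbf{y}^h_\mathrm{out}, \mathbf{p}(R_{i_h,j_h}))$. The first and last regions are treated identically by regarding $s$ (resp.\ $d$) as the entry (resp.\ exit) point and snapping it to the nearest grid intersection, which replaces $\mathbf{y}^1_\mathrm{in}$ by $\mathbf{x}_s$ and $\mathbf{y}^k_\mathrm{out}$ by $\mathbf{x}_d$ in the respective terms, again within the same $R$ slack.

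Finally I would convert nominal grid lengths into actual edge lengths. Every realized hop joins the satellites $v_\mathbf{x}, v_\mathbf{y}$ closest to adjacent grid points $\mathbf{x},\mathbf{y}$ at distance $R/2$; by the triangle inequality together with the density–distortion bound (each grid point has its nearest satellite within $2\eta + 1/(\sqrt{2}\rho)$, and this satellite is unique for $\rho$ large, keeping all degrees at $4$ and all edge lengths below $R$), such a hop has length at most $R/2 + 2(2\eta + 1/(\sqrt{2}\rho)) = \frac{R}{2}\bigl(1 + \frac{4}{R}(2\eta + 1/(\sqrt{2}\rho))\bigr)$. Summing over all intra-region hops multiplies the total nominal intra-region length by $1 + \frac{4}{R}(2\eta + 1/(\sqrt{2}\rho))$, and adding back the uninflated $(k-1)\,R/2$ from the bridge edges yields exactly the stated bound. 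The main obstacle is the intra-region estimate of the second paragraph: carefully showing that navigating the rotated, boundary-clipped grid from one boundary intersection point to another costs at most $R$ beyond the Manhattan distance between them — this needs a case analysis on which sides of $\partial Q_{i_h,j_h}$ the entry and exit points lie on, a verification that the staircase can always be chosen to remain inside $Q_{i_h,j_h}$ (routing along the boundary nodes of $\bar Q_{i_h,j_h}$ when necessary, whose spacing is itself bounded in terms of $R$), and control of the at-most-$R/2$ snapping cost at each end; the bookkeeping that the bridge edges exist and that the concatenation is a genuine walk in $E_\mathrm{ALG}$ is a secondary but necessary check.
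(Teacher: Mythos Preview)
Your proposal is correct and follows essentially the same route as the paper: exhibit the candidate path through the $k$ regions, bound each intra-region piece by the $\ell_1$ (Manhattan) distance in the $(\mathbf{p},\mathbf{s})$ frame plus an additive $R$ for the two endpoint snaps, convert nominal grid hops of length $R/2$ to actual hops of length at most $R/2 + 2(2\eta + 1/(\sqrt{2}\rho))$, and add $R/2$ for each of the $k-1$ bridge edges. The paper's proof is terser --- it simply asserts the hop count $2 + \tfrac{2}{R}\|\cdot\|(|\cos\theta|+|\sin\theta|)$ and multiplies by the per-hop bound --- whereas you are more explicit about why the bridge edges exist (choosing the shared-boundary point $\mathbf{x}$ to be the crossing of $\gamma$) and about the boundary-clipped routing, which the paper does not spell out; your identification of that case analysis as the main remaining work is accurate.
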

\begin{proof}
Consider the ideal grid formed by the points $\bar{Q}_{i_1,j_1}$ oriented along the direction $\mathbf{p}(R_{i_1,j_1})$.  
Consider the shortest path from $\mathbf{x}_s$ to $\mathbf{y}^1_\mathrm{out}$ on the ideal grid. 
The length of this path is $\| \mathbf{x}_s - \mathbf{y}^1_\mathrm{out} \| ( | \cos(\theta(\mathbf{p}(R_{i_1,j_1}), \mathbf{x}_s - \mathbf{y}^1_\mathrm{out})) | + | \sin(\theta(\mathbf{p}(R_{i_1,j_1}), \mathbf{x}_s - \mathbf{y}^1_\mathrm{out}))| )$, and comprises of at most $2 + \frac{2}{R} \| \mathbf{x}_s - \mathbf{y}^1_\mathrm{out} \| ( | \cos(\theta(\mathbf{p}(R_{i_1,j_1}), \mathbf{x}_s - \mathbf{y}^1_\mathrm{out})) | + | \sin(\theta(\mathbf{p}(R_{i_1,j_1}), \mathbf{x}_s - \mathbf{y}^1_\mathrm{out}))| )$ hops. 
Now, consider the same path in the actual topology where each point in $\bar{Q}_{i,j}$ is replaced with the satellite closest to it. 
While the edge length in the ideal grid is exactly $R/2$ in the actual topology the edge length is at most $R/2 + 2(2\eta + 1/(\sqrt{2}\rho))$. 
Hence, an upper bound for the path length from $s$ to $v_{\mathbf{y}^1_\mathrm{out}}$ in the actual topology is 
\begin{align}
(R/2 + 2(2\eta + 1/(\sqrt{2}\rho)))( 2 + \frac{2}{R} \| \mathbf{x}_s - \mathbf{y}^1_\mathrm{out} \| ( | \cos(\theta(\mathbf{p}(R_{i_1,j_1}), \mathbf{x}_s - \mathbf{y}^1_\mathrm{out})) | \notag \\
+ | \sin(\theta(\mathbf{p}(R_{i_1,j_1}), \mathbf{x}_s - \mathbf{y}^1_\mathrm{out}))| ) ). 
\end{align}
Next, to go from $v_{\mathbf{y}^1_\mathrm{out}}$ to $v_{\mathbf{y}^2_\mathrm{in}}$ the cost is at most $R/2$. 
By continuing this process for all the regions, the theorem follows.
\end{proof}

If the demand $(s,d) \in \mathcal{D}$ does not have other competing demands in its vicinity, the primary direction $\mathbf{p}(R(i_h, j_h))$ is parallel to $\mathbf{x}_d - \mathbf{x}_s$ for $h=1,2,\ldots,k$. 
Also, for any $h \in \{1,2,\ldots,k\}$, the points $\mathbf{y}^h_\mathrm{in}, \mathbf{y}^h_\mathrm{out}$ are at a distance of at most $\frac{R}{2\sqrt{2}}$ to $\mathbf{x}^h_\mathrm{in}, \mathbf{x}^h_\mathrm{out}$ respectively. 
Therefore, the angle between $\mathbf{y}^h_\mathrm{in} - \mathbf{y}^h_\mathrm{out}$ and  $\mathbf{x}^h_\mathrm{in} - \mathbf{x}^h_\mathrm{out}$ differs by at most $\sin^{-1}\left(\frac{R}{2\sqrt{2}\| \mathbf{y}^h_\mathrm{in} - \mathbf{y}^h_\mathrm{out} \|} \right)$. 
We therefore have the following corollary. 
\begin{corollary}
If the demand $(s,d) \in \mathcal{D}$ does not have other competing demands in its vicinity with $\mathbf{p}(R(i_h, j_h))$ parallel to $\mathbf{x}_d - \mathbf{x}_s$ for $h=1,2,\ldots,k$, the length of the shortest path from $s$ to $d$ can be bounded as
\begin{align}
l(p_{s,d}) \leq \left(1 + \frac{4}{R}(2\eta + 1/(\sqrt{2}\rho)) \right) \left( (R + \| \mathbf{x}_s - \mathbf{y}^1_\mathrm{out} \|(1  + \frac{R}{2\sqrt{2}\| \mathbf{x}_s - \mathbf{y}^1_\mathrm{out} \|}  ) ) \right. \notag \\
+ \sum_{h=2}^{k-1} (R + \| \mathbf{y}^h_\mathrm{in} - \mathbf{y}^h_\mathrm{out} \|(1 + \frac{R}{2\sqrt{2}\| \mathbf{y}^h_\mathrm{in} - \mathbf{y}^h_\mathrm{out} \|}  ) )\notag \\
\left. + (R + \| \mathbf{x}_d - \mathbf{y}^k_\mathrm{in} \|(1 + \frac{R}{2\sqrt{2}\| \mathbf{x}_d - \mathbf{y}^k_\mathrm{in} \|} ) ) \right) + (k-1)\frac{R}{2}.  
\end{align} 
\end{corollary}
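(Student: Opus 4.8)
The plan is to derive the corollary as a direct specialization of Theorem~\ref{thm:upper bound}: under the parallelism hypothesis every trigonometric factor in that bound collapses, and the claim follows by a term-by-term substitution. Recall that the right-hand side of Theorem~\ref{thm:upper bound} is the constant $\bigl(1 + \frac{4}{R}(2\eta + 1/(\sqrt{2}\rho))\bigr)$ times a sum of per-region contributions of the form $R + \|\mathbf{z}^h\|\bigl(|\cos(\theta(\mathbf{z}^h, \mathbf{p}_h))| + |\sin(\theta(\mathbf{z}^h, \mathbf{p}_h))|\bigr)$, where $\mathbf{p}_h = \mathbf{p}(R(i_h, j_h))$ and $\mathbf{z}^h$ equals $\mathbf{x}_s - \mathbf{y}^1_\mathrm{out}$, $\mathbf{y}^h_\mathrm{in} - \mathbf{y}^h_\mathrm{out}$, and $\mathbf{x}_d - \mathbf{y}^k_\mathrm{in}$ for the first, intermediate, and last regions respectively, plus an additive inter-region term $(k-1)R/2$ that is left untouched. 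Since $|\cos(\cdot)| \le 1$ unconditionally, the only quantity that requires a new estimate is $|\sin(\theta(\mathbf{z}^h, \mathbf{p}_h))|$.

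First I would extract the geometric content of the hypothesis $\mathbf{p}_h \parallel (\mathbf{x}_d - \mathbf{x}_s)$: the straight line $\gamma$ restricted to region $h$ is then parallel to $\mathbf{p}_h$, so the chord $\mathbf{x}^h_\mathrm{in} - \mathbf{x}^h_\mathrm{out}$ points exactly along $\mathbf{p}_h$ (and for $h = 1, k$ one endpoint of $\mathbf{z}^h$ is literally $\mathbf{x}_s$ or $\mathbf{x}_d$ on this line). Because the grid of region $h$ has spacing $R/2$ and is oriented along $\mathbf{p}_h$, each of $\mathbf{y}^h_\mathrm{in}, \mathbf{y}^h_\mathrm{out}$, being the nearest grid intersection to $\mathbf{x}^h_\mathrm{in}, \mathbf{x}^h_\mathrm{out}$, lies within half a cell diagonal, i.e., within $R/(2\sqrt{2})$, of its target. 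Feeding this displacement into the elementary angle-perturbation bound already recorded immediately before the corollary gives $\theta(\mathbf{z}^h, \mathbf{p}_h) \le \sin^{-1}\!\bigl(R/(2\sqrt{2}\,\|\mathbf{z}^h\|)\bigr)$, and hence $|\sin(\theta(\mathbf{z}^h, \mathbf{p}_h))| \le R/(2\sqrt{2}\,\|\mathbf{z}^h\|)$.

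It then remains only to substitute $|\cos(\theta(\mathbf{z}^h, \mathbf{p}_h))| + |\sin(\theta(\mathbf{z}^h, \mathbf{p}_h))| \le 1 + R/(2\sqrt{2}\,\|\mathbf{z}^h\|)$ into every summand of Theorem~\ref{thm:upper bound}, turning the $h$-th contribution into $R + \|\mathbf{z}^h\|\bigl(1 + R/(2\sqrt{2}\,\|\mathbf{z}^h\|)\bigr)$ and likewise for the two boundary terms with $\mathbf{z}^1 = \mathbf{x}_s - \mathbf{y}^1_\mathrm{out}$ and $\mathbf{z}^k = \mathbf{x}_d - \mathbf{y}^k_\mathrm{in}$. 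Carrying the prefactor $\bigl(1 + \frac{4}{R}(2\eta + 1/(\sqrt{2}\rho))\bigr)$ and the additive $(k-1)R/2$ through unchanged reproduces exactly the displayed inequality.

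I expect the only genuinely delicate point to be the angle-perturbation estimate of the second paragraph: displacing both endpoints of a chord can in principle tilt it more than displacing a single endpoint, so one must argue that snapping $\mathbf{x}^h_\mathrm{in}$ and $\mathbf{x}^h_\mathrm{out}$ — which share a common line parallel to the grid direction $\mathbf{p}_h$ — onto the lattice perturbs the chord only in the direction orthogonal to $\mathbf{p}_h$, and by at most $R/(2\sqrt{2})$, so that $\sin^{-1}\!\bigl(R/(2\sqrt{2}\,\|\mathbf{z}^h\|)\bigr)$ is a legitimate bound; at region boundaries, where cross-region stitching replaces intersection points, one should verify the same reasoning still applies or else absorb a harmless constant factor. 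Everything else is a mechanical rewriting of Theorem~\ref{thm:upper bound}.
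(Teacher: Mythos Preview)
Your proposal is correct and follows essentially the same approach as the paper: the paper's justification (the two sentences immediately preceding the corollary) likewise bounds the displacement of $\mathbf{y}^h_\mathrm{in}, \mathbf{y}^h_\mathrm{out}$ from $\mathbf{x}^h_\mathrm{in}, \mathbf{x}^h_\mathrm{out}$ by $R/(2\sqrt{2})$, derives the same $\sin^{-1}$ angle bound, and substitutes into Theorem~\ref{thm:upper bound}. Your added caution about the two-endpoint perturbation and the boundary stitching is more explicit than the paper, which simply asserts the angle bound without further comment.
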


\end{document}